\def\rset{\mathbb R}
\def\zset{\mathbb Z}
\def\eqsp{\;}
 \newcommand{\seq}[1]{\left\langle#1\right\rangle}
\newcommand{\eqdef}{\ensuremath{\stackrel{\mathrm{def}}{=}}}
\def\Xset{\mathsf{X}} 
\def\F{\mathcal{F}} 
\newcommandx\sequence[3][2=t,3=\zset]
\def\PP{\mathbb{P}} 
\newcommand{\CPP}[3][]
{\ifthenelse{\equal{#1}{}}{{\mathbb P}\left(\left. #2 \, \right| #3 \right)}{{\mathbb P}_{#1}\left(\left. #2 \, \right | #3 \right)}}
\def\PE{\mathbb{E}} 
\newcommand{\CPE}[3][]
{\ifthenelse{\equal{#1}{}}{{\mathbb E}\left[\left. #2 \, \right| #3 \right]}{{\mathbb E}_{#1}\left[\left. #2 \, \right | #3 \right]}}
\def\I{\textsf{I}}
\def\Id{\mathbb{I}}
\theoremstyle{plain}
\newtheorem{theorem}{Theorem}
\newtheorem{assumption}{H\hspace{-3pt}}
\newaliascnt{proposition}{theorem}
\newtheorem{proposition}[proposition]{Proposition}
\newaliascnt{lemma}{theorem}
\newaliascnt{corollary}{theorem}
\newtheorem{corollary}[corollary]{Corollary}
\theoremstyle{definition}
\newaliascnt{definition}{theorem}
\newtheorem{algorithm}{Algorithm}
\newaliascnt{remark}{theorem}
\newtheorem{remark}[remark]{Remark}
\newaliascnt{example}{theorem}
\def\rmd{\mathrm{d}}
\def\1{\mathbbm{1}}
\begin{document}

\title[On the sparse Bayesian Learning of linear models]{On the sparse Bayesian Learning of linear models}
\author{Chia Chye Yee}  \thanks{C. C. Yee: University of Michigan, 1085 South University, Ann Arbor,
  48109, MI, United States. {\em E-mail address:} chye@umich.edu}
\author{Yves F. Atchad\'e}  \thanks{ Y. F. Atchad\'e: University of Michigan, 1085 South University, Ann Arbor,
  48109, MI, United States. {\em E-mail address:} yvesa@umich.edu}

\subjclass[2000]{60F15, 60G42}

\keywords{Empirical Bayes inference, Linear regression, High-dimensional inference, Sparsity}

\maketitle

\begin{center} (Dec. 2014) \end{center}

\begin{abstract}
This work is a re-examination of the sparse Bayesian learning (SBL) of linear regression models of \cite{tipping01}  in a high-dimensional setting. We propose a hard-thresholded version of the SBL estimator that achieves, for orthogonal design matrices, the non-asymptotic estimation error rate of $\sigma\sqrt{s\log p}/\sqrt{n}$, where $n$ is the sample size, $p$ the number of regressors, $\sigma$ is the regression model standard deviation, and $s$ the number of non-zero regression coefficients. We also establish that with high-probability the estimator identifies the non-zero regression coefficients.   In our simulations we found that sparse Bayesian learning regression performs better than lasso (\cite{tib96}) when the signal to be recovered is strong.
\end{abstract}

\bigskip

\setcounter{secnumdepth}{3}

\section{Introduction}
High-dimensional variable selection has become an important topic in modern statistics. The least absolute shrinkage and selection
operator (lasso) of \cite{tib96} is probably the most widely used method for this problem and has span an extensive literature (see e.g. the monograph \cite{buhlmannetdegeer11}). Despite its success, the method has many shortcomings. For instance choosing the right amount of regularization remains a difficult and computer-intensive issue for many models. In parallel to the frequentist approach, Bayesian variable selection for high-dimensional problems has also generated a large literature (see for instance \cite{oharaetsillanpaa09} and the reference therein). But most Bayesian variable selection methods often lead to intractable posterior distributions that require a heavy use of Markov Chain Monte Carlo simulation. Between these two well-established frameworks lies an empirical Bayes alternative known as sparse Bayesian learning (SBL, \cite{tipping01}), which has received much less attention in the statistical literature. 

This paper is a re-examination of the SBL for linear regression in a high-dimensional setting. An interesting question is whether the SBL procedure recovers the sparsity structure of underlying signals. This problem was considered by \cite{wipfetrao04} which establishes that in the noiseless setting the SBL indeed recovers the sparsity structure of the regression coefficients. However the method behaves differently in  a noisy setting. For  orthogonal design matrices, we show that the SBL indeed produces a sparse solution of the regression coefficients, but does not in general recover the sparsity structure of the regression coefficients. To remedy this limitation we propose a hard-thresholded version of the SBL estimator. We show that with high probability the thresholded estimator achieves the same estimation error of $O(\sigma\sqrt{s\log(p)/n})$ as lasso, where $n$ is the sample size, $\sigma$ is the regression model standard deviation, $p$ the number of regressors and $s$ the number of non-zero regression coefficients. Furthermore we show that with high probability this thresholded estimator recovers the sparsity structure of the regression coefficients provided that the signal is not too weak.

Finally we did a simulation study comparing SBL and lasso. We find that the performance of SBL depends on the strength of the signal (defined here as the minimum of the absolute value of the non-zero coefficients).  With a weak signal SBL performs poorly compared to lasso, but outperforms lasso when the signal is strong.

The paper is organized as follows. We introduce the SBL method at the beginning of Section \ref{sec:main}. We study the computation and the sparsity structure of the SBL estimator in Sections \ref{sec:existence}-\ref{sec:computing}. The hard-thresholded estimator is defined and studied in Section \ref{sec:stat:theory}. The simulation study is reported in Section \ref{sec:sim}, and all the technical proofs are grouped in Section \ref{sec:proofs}. We end the paper with some open problems in Section \ref{sec:conclusion}.

\section{Sparse Bayesian learning of linear regression models}\label{sec:main}
Suppose that we observe a vector $y\in\rset^n$ that is a realization of a random variable $Y$ such that
\begin{equation}\label{model1}
Y=X\beta_\star + \epsilon,\end{equation}
for a known and non-random design matrix $X\in\rset^{n\times p}$, a vector $\beta_\star\in\rset^p$, and a random error term $\epsilon\in\rset^n$ such that \begin{equation}\label{assump_model}
\PE(\epsilon)=0, \;\; \mbox{ and }\;\; \PE(\epsilon\epsilon')=\sigma_\star^2 \Id_n,\end{equation}
for $\sigma^2_\star>0$,  where $\Id_n$ is the $n$-dimensional identity matrix. Our objective is to estimate $\beta_\star$ and $\sigma_\star^2$. Although (\ref{model1}-\ref{assump_model}) does not make any specific distributional assumption on $Y$,  we will consider the following possibly misspecified model: $Y\sim \textbf{N}(X\beta,\sigma^2I_n)$, with parameter $(\beta,\sigma^2)\in\rset^p\times (0,\infty)$, where $\textbf{N}(\mu,\Sigma)$ denotes the Gaussian distribution with mean $\mu$ and covariance matrix $\Sigma$. The parameter $\sigma^2$ is taken as fixed, and we assign to $\beta$ a prior distribution of the form
\begin{equation}\label{prior_dist}
\pi_\gamma(\rmd \beta)\eqdef \prod_{j=1}^p p_{\gamma_j}(\rmd \beta_j).\end{equation}
for a (hyper)-parameter $\gamma=(\gamma_1,\ldots,\gamma_p)\in\Theta\eqdef [0,\infty)^p$, where for $a>0$, $p_a$ denotes the distribution of $\textbf{N}(0,a)$, the Gaussian distribution on $\rset$ with mean $0$ and variance $a$,  and $p_0(\rmd u)\eqdef\delta_0(\rmd u)$ denotes the Dirac measure at $0$. 
 The posterior distribution of $\beta$ given $Y=y$ and given the hyper-parameter $(\gamma,\sigma^2)$ is therefore
\begin{equation}\label{postdist}
\pi_n(\rmd \beta\vert y,\sigma^2,\gamma)\propto \left(\frac{1}{2\pi\sigma^2}\right)^{n/2}\exp\left(-\frac{1}{2\sigma^2}\|y-X\beta\|^2\right)\pi_\gamma(\rmd \beta).\end{equation}

Sampling from the posterior distribution $\pi_n(\cdot\vert y,\sigma^2,\gamma)$ is straightforward. Indeed, for $\gamma=(\gamma_1,\ldots,\gamma_p)\in\Theta$, denote $\textsf{I}_\gamma\eqdef \{1\leq j\leq p:\; \gamma_j\neq 0\}$ the sparsity structure defined by $\gamma$. Notice that for $j\notin\I_\gamma$ (that is $\gamma_j=0$), $\pi_\gamma$ puts probability mass $1$ on the event $\{\beta_j=0\}$, and so does the posterior distribution $\pi_n(\cdot\vert y,\sigma^2,\gamma)$. Hence $\pi_n(\cdot\vert y,\sigma^2,\gamma)$ is the distribution of the random variable $(B_1,\ldots,B_p)$ obtained by simulating $\{B_j,j\in \textsf{I}_\gamma\}$ from $\textbf{N}(\mu_\gamma,\sigma^2V_\gamma)$, and by setting the remaining components to $0$, where
\begin{equation}\label{moment:post}
\mu_\gamma=V_\gamma X_\gamma'y,\;\;V_\gamma=\left(X_\gamma'X_\gamma+\sigma^2\bar\Gamma^{-1}_\gamma\right)^{-1},\end{equation}
where $X_\gamma$ is the matrix obtained from $X$ by removing the columns $j$ for which $\gamma_j=0$, and $\bar \Gamma_\gamma$ is the diagonal matrix with diagonal elements given by $\{\gamma_j,\,j\in\textsf{I}_\gamma\}$. With this Gaussian linear model, and prior (\ref{prior_dist}), it is easy to check that the marginal distribution of $y$ is $\textbf{N}(0,C_\gamma)$, where
\[C_\gamma\eqdef \sigma^2\Id_n +\sum_{j\in\I_\gamma}\gamma_jx_jx_j',\]
and $x_j$ is the $j$-th column of $X$. Therefore, up to a normalizing constant that we ignore, the log-likelihood of $(\sigma^2,\gamma)$ is given by
\[\ell(\sigma^2,\gamma)\eqdef -\frac{1}{2}\log\det(C_\gamma)-\frac{1}{2}\textsf{Tr}\left(C_\gamma^{-1}yy'\right).\]
The sparse Bayesian learning (SBL) estimator of $\beta_\star$ as proposed by \cite{tipping01,faulettipping02} is the empirical Bayes estimator of $\beta$ given by
\begin{equation}\label{estimator1}
\hat \beta_n=\int\beta\pi_n(\rmd \beta\vert y,\hat\sigma_n^2,\hat\gamma_n),\end{equation}
where
\begin{equation}\label{opt:probl}
(\hat\sigma_n^2,\hat\gamma_n)=\textsf{Argmax}_{(\sigma^2,\gamma)\in\rset_+\times\Theta}\;\ell(\sigma^2,\gamma).\end{equation}
Notice that $\hat\beta_n$ is straightforward to compute once $\hat\sigma_n^2$ and $\hat\gamma_n$ are available. Indeed given $\hat\sigma_n^2$ and $\hat\gamma_n$, $\hat\beta_{n,j}=0$ for all $j$ such  that $\hat\gamma_{n,j}=0$, and for the other components $j\in \I_{\hat\gamma_n}$, we have from (\ref{moment:post}) that
\[(\hat\beta_{n,j})_{j\in \I_{\hat\gamma_n}}=\left(X_{\hat\gamma_n}'X_{\hat\gamma_n}+\hat\sigma_n^2\bar\Gamma^{-1}_{\hat\gamma_n}\right)^{-1}X_{\hat\gamma_n}'y.\]

\begin{remark}
The presentation of the SBL given above is slightly different from the original presentation of \cite{tipping01,faulettipping02}. The key difference here is that in the prior distribution $\pi_\gamma$ we allow the components of $\gamma$ to take the value zero. This is needed for the estimator $\hat\gamma_n$ to be well-defined, and for the well-posedness of the question of whether the procedure produces sparse solutions.
\vspace{-0.2cm}
\begin{flushright}
$\square$
\end{flushright}
\end{remark}

Computationally, the optimization problem (\ref{opt:probl}) is not a ``nice" problem because the objective function $\ell(\sigma^2,\gamma)$ is non-concave and typically attains its maximum at the boundary of the domain $\Theta$ (that is some of the components of its solution(s) are exactly zeros). We return to the issue of solving (\ref{opt:probl}) in Section \ref{sec:computing}. But statistically (\ref{opt:probl}) is interesting as it yields a sparse solution $\hat\gamma_n$ as we shall see.

\subsection{Existence of $\hat\gamma_n$}\label{sec:existence}
Since the log-likelihood function $\ell$ is not concave in general, it is not immediately clear that the optimization problem (\ref{opt:probl}) has a solution. It is even less clear whether the solution is sparse.  Focusing on the case where $\sigma^2$ is assumed known, we show that a solution always exists.
\begin{proposition}\label{prop1}
Fix $y\in\rset^n$, $X\in\rset^{n\times p}$, and $\sigma^2=\sigma^2_\star$. Then the maximization problem $\textsf{Argmax}_{\gamma\in\Theta}\ell(\gamma,\sigma^2)$ has at least one solution $\hat\gamma=(\hat\gamma_{1},\ldots,\hat\gamma_{p})$ which has the following property: 
\begin{equation}\label{expr:hatgamma}
\hat\gamma_{j}=\left\{\begin{array}{ll}\frac{\left(x_j'C_j^{-1}y\right)^2-x_j'C_j^{-1}x_j}{\left(x_j'C_j^{-1}x_j\right)^2} & \;\;\mbox{ if }\;\; \left(x_j'C_j^{-1}y\right)^2> x_j'C_j^{-1}x_j\\ 0 &\mbox{ otherwise}\;,\end{array}\right.\end{equation}
where $C_j$ is given by
\[C_j\eqdef \sigma^2\Id_n +\sum_{k\in\I_{\hat\gamma}\setminus\{j\}}\hat\gamma_{k}x_kx_k'.\]
\end{proposition}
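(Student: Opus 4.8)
The plan is to prove existence of a maximizer by a Weierstrass-type argument (continuity plus coercivity on the relevant coordinates), and then to read off the stated closed form from the first-order optimality conditions taken one coordinate at a time. The crucial algebraic observation is that, for any fixed $j$, the objective $\ell(\sigma^2,\cdot)$ depends on $\gamma_j$ in a completely explicit scalar way once the remaining coordinates are frozen; this is exactly what produces the rational expression in (\ref{expr:hatgamma}).

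First I would settle existence. Since $\sigma^2=\sigma^2_\star>0$, the matrix $C_\gamma=\sigma^2\Id_n+\sum_{j}\gamma_j x_jx_j'$ is positive definite for every $\gamma\in\Theta$, so $\gamma\mapsto\ell(\sigma^2,\gamma)$ is continuous on the closed set $\Theta$. Because $\mathrm{Tr}(C_\gamma^{-1}yy')=y'C_\gamma^{-1}y\ge 0$, one has the bound $\ell(\sigma^2,\gamma)\le -\tfrac12\log\det(C_\gamma)$. Now $C_\gamma\succeq\sigma^2\Id_n$, so $n-1$ of its eigenvalues are at least $\sigma^2$, while its largest eigenvalue is bounded below by $n^{-1}\mathrm{Tr}(C_\gamma)=\sigma^2+n^{-1}\sum_j\gamma_j\|x_j\|^2$; hence $\det(C_\gamma)\to\infty$, and thus $\ell(\sigma^2,\gamma)\to-\infty$, whenever $\sum_{j:\,x_j\ne 0}\gamma_j\to\infty$. (Coordinates with $x_j=0$ leave $\ell$ unchanged and may be set to zero; note the formula (\ref{expr:hatgamma}) also returns $0$ for such $j$, since then $x_j'C_j^{-1}y=x_j'C_j^{-1}x_j=0$.) So $\ell$ is coercive on the non-degenerate coordinates and attains a global maximum at some $\hat\gamma\in\Theta$.

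Next I would characterize $\hat\gamma$ coordinate by coordinate. Fixing $j$ and writing $C_\gamma=C_j+\gamma_j x_jx_j'$ with $C_j$ independent of $\gamma_j$, the matrix determinant lemma and the Sherman--Morrison formula give $\det(C_\gamma)=\det(C_j)(1+\gamma_j s_j)$ and $y'C_\gamma^{-1}y=y'C_j^{-1}y-\gamma_j q_j^2/(1+\gamma_j s_j)$, where $q_j\eqdef x_j'C_j^{-1}y$ and $s_j\eqdef x_j'C_j^{-1}x_j$. Hence, up to an additive constant independent of $\gamma_j$,
\[\ell(\sigma^2,\gamma)=-\tfrac12\log(1+\gamma_j s_j)+\tfrac12\frac{\gamma_j q_j^2}{1+\gamma_j s_j}.\]
Since $\hat\gamma$ is a global maximizer, $\hat\gamma_j$ maximizes this scalar function over $[0,\infty)$. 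Differentiating, the derivative has the sign of $q_j^2-s_j(1+\gamma_j s_j)$, which is strictly decreasing in $\gamma_j$ because $s_j=x_j'C_j^{-1}x_j>0$ when $x_j\ne 0$; the function is therefore unimodal with unique interior stationary point $(q_j^2-s_j)/s_j^2$. Consequently the maximizer over $[0,\infty)$ equals $(q_j^2-s_j)/s_j^2$ when $q_j^2>s_j$ and equals $0$ otherwise, which is precisely (\ref{expr:hatgamma}) with $C_j=\sigma^2\Id_n+\sum_{k\in\I_{\hat\gamma}\setminus\{j\}}\hat\gamma_k x_kx_k'$.

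The two matrix identities and the one-dimensional calculus are routine. The main obstacle, and the step deserving the most care, is existence: because $\ell(\sigma^2,\cdot)$ is non-concave and $\Theta$ is unbounded, one must genuinely verify coercivity (and dispose of the degenerate directions $x_j=0$) rather than invoke concavity. A second, more cosmetic, subtlety is that the displayed expression is a self-consistency relation rather than a true closed form: $\hat\gamma_j$ is written through $C_j$, which itself involves the other optimal coordinates $\hat\gamma_{-j}$. The coordinate-wise optimality argument characterizes the \emph{form} of a maximizer, which is exactly what the Proposition asserts.
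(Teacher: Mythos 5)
Your proposal is correct and follows essentially the same route as the paper: existence via continuity plus a coercivity bound on $-\tfrac12\log\det(C_\gamma)$, then a coordinate-wise Sherman--Morrison reduction to an explicit scalar function of $\gamma_j$ whose unique maximizer over $[0,\infty)$ yields (\ref{expr:hatgamma}), with the final self-consistency obtained by the same ``otherwise we could improve the likelihood'' contradiction. The only differences are cosmetic: you establish coercivity through a trace/eigenvalue bound where the paper bounds $\det(C_\gamma)$ below by $\det\left(\sigma^2\Id_n+\gamma_i x_i x_i'\right)$, and you explicitly dispose of the degenerate columns $x_j=0$, a case the paper tacitly ignores.
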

\begin{proof}
See Section \ref{proof:prop1}.
\end{proof}

It is important to notice that there is no randomness involved in the above result: $y$ and $X$ are given and fixed. In particular we do not assume (\ref{model1}) nor (\ref{assump_model}). It is clear that this result does not give the expression of the maximizer since the right-hand side of (\ref{expr:hatgamma}) also depends on $\hat\gamma$. Rather it gives coherence relationships between components of the solution. But more importantly the proposition shows that the optimization problem  (\ref{opt:probl}) leads to sparse solutions $\hat\gamma$. One can interpret the term $x_j'C_{j}^{-1}y$ as a measure of correlation between the $y$ and the $j$-th column $x_j$ of $X$. Hence the result shows that if the correlation between $x_j$ and $y$ is sufficiently weak then $\hat\gamma_{n,j}$ (and hence $\hat\beta_{n,j}$) is set exactly equal to zero. Of course  Proposition \ref{prop1} is useful only to the extent that the inequality $\left(x_j'C_j^{-1}y\right)^2\leq x_j'C_j^{-1}x_j$ is satisfied with high probability when $\beta_{\star,j}=0$. We investigate this below. Unfortunately we will see that in general $\hat\gamma$ does not recover exactly  the sparsity structure of $\beta_\star$, even in the most favorable setting. We make the following distributional assumption.
\begin{assumption}
\label{A1} The data generating model (\ref{model1}-\ref{assump_model}) holds and $\epsilon\sim \textbf{N}(0,\sigma_\star^2I_n)$, for some $\sigma_\star^2>0$.
\end{assumption}
\medskip

We shall also focus our analysis on the idealized case where the matrix $X$ has orthogonal columns.

\begin{assumption}
\label{A2} The design matrix $X\in\rset^{n\times p}$ is such that $\seq{x_k,x_j}=0$ whenever $j\neq k$.
\end{assumption}
\medskip

\begin{proposition}\label{prop:prob}
Suppose that H\ref{A1}-\ref{A2} hold, and $\sigma^2=\sigma_\star^2$. Then for any $j\in\{1,\ldots,p\}$ such that $\beta_{\star,j}=0$, 
\[\PP\left[\hat\gamma_{n,j}=0\right] =  \PP\left[Z^2\leq 1\right]\approx 0.68,\]
where $Z\sim \textbf{N}(0,1)$.
\end{proposition}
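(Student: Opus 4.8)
The plan is to combine the characterization of the maximizer from Proposition \ref{prop1} with the orthogonality assumption H\ref{A2}, which decouples the $j$-th coordinate from all the others and reduces the event $\{\hat\gamma_{n,j}=0\}$ to a simple statement about a single Gaussian. By Proposition \ref{prop1}, with $\sigma^2=\sigma_\star^2$ fixed, there is a maximizer for which $\hat\gamma_{n,j}=0$ precisely when $\left(x_j'C_j^{-1}y\right)^2\leq x_j'C_j^{-1}x_j$, where $C_j=\sigma^2\Id_n+\sum_{k\in\I_{\hat\gamma}\setminus\{j\}}\hat\gamma_{k}x_kx_k'$. The crucial observation is that under H\ref{A2} the column $x_j$ is orthogonal to every $x_k$ appearing in the sum defining $C_j$, so that $C_jx_j=\sigma^2x_j+\sum_{k\neq j}\hat\gamma_k x_k\seq{x_k,x_j}=\sigma^2x_j$; that is, $x_j$ is an eigenvector of $C_j$ with eigenvalue $\sigma^2$, and hence $C_j^{-1}x_j=\sigma^{-2}x_j$. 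Consequently $x_j'C_j^{-1}x_j=\sigma^{-2}\|x_j\|^2$ and $x_j'C_j^{-1}y=\sigma^{-2}x_j'y$, so the threshold condition collapses to $(x_j'y)^2\leq\sigma^2\|x_j\|^2$. The value of this step is that the implicitly-defined quantities $\hat\gamma_k$, $k\neq j$, drop out entirely, so I never need to know the remaining coordinates of the maximizer.

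Next I would evaluate $x_j'y$ under H\ref{A1}. Writing $y=X\beta_\star+\epsilon$ and using orthogonality together with the hypothesis $\beta_{\star,j}=0$, the signal part vanishes, since $x_j'X\beta_\star=\sum_k\beta_{\star,k}\seq{x_j,x_k}=\beta_{\star,j}\|x_j\|^2=0$, leaving $x_j'y=x_j'\epsilon$. Because $\epsilon\sim\textbf{N}(0,\sigma_\star^2\Id_n)$, the quantity $x_j'\epsilon$ is a centered Gaussian with variance $\sigma_\star^2\|x_j\|^2$, so that $(x_j'y)^2$ is equal in distribution to $\sigma_\star^2\|x_j\|^2Z^2$ with $Z\sim\textbf{N}(0,1)$. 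Setting $\sigma^2=\sigma_\star^2$, the event $(x_j'y)^2\leq\sigma^2\|x_j\|^2$ is exactly $\{Z^2\leq1\}$, whence $\PP[\hat\gamma_{n,j}=0]=\PP[Z^2\leq1]=2\Phi(1)-1\approx0.68$, where $\Phi$ denotes the standard normal distribution function.

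I do not anticipate a serious obstacle: the whole argument hinges on the eigenvector identity $C_jx_j=\sigma^2x_j$, which is precisely what makes orthogonality so convenient, and everything downstream is a routine Gaussian computation. The one point deserving a line of care is the logical status of the maximizer $\hat\gamma$. Proposition \ref{prop1} only asserts \emph{existence} of a solution obeying (\ref{expr:hatgamma}), but the orthogonality computation above shows that the vanishing of the $j$-th coordinate is governed solely by $(x_j'y)^2$ versus $\sigma^2\|x_j\|^2$, independently of the values of the other coordinates; this decoupling is what makes the probability well-defined and unambiguous.
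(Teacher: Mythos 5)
Your proof is correct and follows essentially the same route as the paper: invoke the characterization of Proposition \ref{prop1}, use orthogonality to reduce the event $\{\hat\gamma_{n,j}=0\}$ to $\{(x_j'y)^2\leq\sigma^2\|x_j\|^2\}$, and finish with a standard Gaussian computation under H\ref{A1}. The only difference is in execution of the key linear-algebra step: where you observe directly that $C_jx_j=\sigma^2x_j$, hence $C_j^{-1}x_j=\sigma^{-2}x_j$, the paper derives the same identities $x_j'C_j^{-1}x_j=\|x_j\|^2/\sigma^2$ and $x_j'C_j^{-1}Y=\seq{x_j,\epsilon}/\sigma^2$ by expanding $C_{j,\gamma}^{-1}$ with the Woodbury matrix identity --- your eigenvector argument is the cleaner of the two.
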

\begin{proof}
See Section \ref{proof:prop:prob}.
\end{proof}

The result above shows that even in the idealized setting of H\ref{A2}, and under the Gaussian linear model assumption, the SBL procedure will set $\hat\gamma_j$ to $0$ (for $j\notin \I$) only about $70\%$ of the time, regardless of the sample size. We do not know whether this result continue to hold for more general design matrices. The behavior of the solution of (\ref{opt:probl}) for a general design matrix is technically more challenging.

Another important limitation of the SBL procedure is the computation of $\hat\gamma_n$ and $\hat\sigma^2_n$. Typically iterative methods (such as the EM algorithm, see Section \ref{sec:computing}) are used. The EM algorithm does not promote sparsity, and converges to the solution only at the limit. Therefore, in finite time, the solutions generated by the EM algorithm are typically not sparse at all.

These two shortcomings limit the usefulness of the basic SBL procedure as an interesting method for sparse signal recovery. However, we observe that when $\beta_{\star,j}=0$, and the condition $\left(x_j'C_{j,\hat\gamma_n}^{-1}Y\right)^2\leq x_j'C_{j,\hat\gamma_n}^{-1}x_j$ fails, assuming again the most favorable setting of H\ref{A2}, $\hat\gamma_j$ is given 
\[\hat\gamma_{n,j}=\frac{\sigma^2(Z_j^2-1)}{\seq{x_j,x_j}},\]
where $Z_j\sim \textbf{N}(0,1)$. Hence $\hat\gamma_j$ has mean zero and variance of order $O(\|x_j\|^{-4})\approx O(n^{-2})$. We  conclude that when SBL fails to set to zero a component $j$ such that $\beta_{\star,j}=0$, the computed SBL solution $\hat\gamma_{j}$ is typically very small. This suggests that a thresholded version of $\hat\gamma_n$ should be able to set  these terms to zero. We pursue this approach in Section \ref{sec:stat:theory}. 


\subsection{A thresholded version and its statistical properties}\label{sec:stat:theory}
We saw in Section \ref{sec:existence} that although sparse, $\hat\gamma_n$  does not recover in general the sparsity structure of $\beta_\star$. To improve on this we propose a modified, hard-thresholded version of $\hat\gamma_n$ denoted $\tilde\gamma_n$ and defined as follows. For $1\leq j\leq p$,
\begin{equation}\label{thres_est}
\tilde\gamma_{n,j}\eqdef\left\{\begin{array}{cc}\hat\gamma_{n,j} & \mbox{ if } \hat\gamma_{n,j}>\frac{\hat\sigma_n^2 z_\star}{\|x_j\|^2}\\ 0  & \mbox{ otherwise}\end{array}\right.,\end{equation}
for a thresholding parameter $z_\star$ that we set to $z_\star=c(1+|\hat \rho|)\log p$, for a constant $c$, and where $\hat\rho$ is an estimate of the largest correlation among the columns of $X$. The corresponding modified  estimator of $\beta_\star$ is 
\[\tilde \beta_n\eqdef\int\beta\pi_n(\rmd \beta\vert y,\hat\sigma_n^2,\tilde\gamma_n).\]

\begin{theorem}\label{thm1}
Assume H\ref{A1}-\ref{A2}, and suppose that $\sigma^2_\star$ is known, $\log s\geq 1$, and $z_\star=c_0\log p$, for some constant $c_0> 2$, where $s=|\I_{\gamma_\star}|$. Then
\begin{equation}
\|\tilde \beta_n-\beta_\star\|_2^2 \leq M\frac{\sigma^2 s\log(p)}{n},
\end{equation}
with probability at least $1-\frac{1}{p^{(c_0 s)/8}}-\frac{1}{\exp(s)}$, where $M=\frac{4(2+c_0)}{c}$, and $c=\min_{1\leq i\leq p}\|x_j\|^2/n$.
\end{theorem}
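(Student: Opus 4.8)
The plan is to exploit Assumption H\ref{A2} to collapse the coupled fixed-point relations of Proposition \ref{prop1} into explicit, fully decoupled formulas, and then to bound the squared error coordinate by coordinate. Since the columns of $X$ are orthogonal, each $x_j$ is an eigenvector of $C_j$ with eigenvalue $\sigma^2$, so $C_j^{-1}x_j=\sigma^{-2}x_j$; consequently $x_j'C_j^{-1}x_j=\|x_j\|^2/\sigma^2$ and $x_j'C_j^{-1}y=\seq{x_j,y}/\sigma^2$, with \emph{no} dependence on the other coordinates of $\hat\gamma$. Writing $b_j\eqdef\seq{x_j,y}/\|x_j\|^2$ for the ordinary least squares coordinate and $Z_j\eqdef\seq{x_j,\epsilon}/(\sigma\|x_j\|)$, Proposition \ref{prop1} gives $\hat\gamma_{n,j}=(\seq{x_j,y}^2-\sigma^2\|x_j\|^2)/\|x_j\|^4$ when positive. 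A short computation shows that (with $\sigma^2$ known, $\hat\sigma_n^2=\sigma^2$) the rule (\ref{thres_est}) keeps coordinate $j$ exactly when $b_j^2>\sigma^2(1+z_\star)/\|x_j\|^2$, in which case the posterior mean (\ref{moment:post}) simplifies to the sign-preserving shrinkage $\tilde\beta_{n,j}=b_j-\sigma^2/(\|x_j\|^2 b_j)$, with $|\tilde\beta_{n,j}|\le|b_j|$; otherwise $\tilde\beta_{n,j}=0$. Under H\ref{A1}--H\ref{A2} the $\{Z_j\}$ are i.i.d.\ $\textbf{N}(0,1)$ and $b_j=\beta_{\star,j}+\sigma Z_j/\|x_j\|$.

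Next I would bound $\|\tilde\beta_n-\beta_\star\|_2^2=\sum_j(\tilde\beta_{n,j}-\beta_{\star,j})^2$ one coordinate at a time, splitting according to whether $j\in\I_{\gamma_\star}$ (a signal) or not (a null), and whether $j$ is kept or dropped. For a kept signal, expanding $\tilde\beta_{n,j}-\beta_{\star,j}=\sigma Z_j/\|x_j\|-\sigma^2/(\|x_j\|^2 b_j)$ and using the selection inequality $b_j^2>\sigma^2(1+z_\star)/\|x_j\|^2$ bounds the error by $2\sigma^2(Z_j^2+1)/\|x_j\|^2$. For a dropped signal $\tilde\beta_{n,j}=0$, and since dropping forces $b_j^2\le\sigma^2(1+z_\star)/\|x_j\|^2$, the identity $\beta_{\star,j}=b_j-\sigma Z_j/\|x_j\|$ gives $\beta_{\star,j}^2\le 2\sigma^2(1+z_\star)/\|x_j\|^2+2\sigma^2 Z_j^2/\|x_j\|^2$. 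Thus every signal coordinate contributes at most $2\sigma^2(1+z_\star+Z_j^2)/\|x_j\|^2$; the key point is that the threshold $z_\star\asymp\log p$ caps the loss from a missed signal at exactly the target rate. For a null coordinate $\tilde\beta_{n,j}=0$ unless it is kept, and when kept $|\tilde\beta_{n,j}|\le|b_j|=\sigma|Z_j|/\|x_j\|$, so its contribution is at most $\sigma^2 Z_j^2/\|x_j\|^2$ and occurs only on the event $\{Z_j^2>1+z_\star\}$.

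Summing these bounds and using $\|x_j\|^2\ge cn$, $z_\star=c_0\log p$ and $\log p\ge\log s\ge 1$, the claim reduces to showing, with the stated probability, the single inequality
\begin{equation*}
2\sum_{j\in\I_{\gamma_\star}}Z_j^2+\sum_{j\notin\I_{\gamma_\star}}Z_j^2\,\1\{Z_j^2>1+z_\star\}\le(6+2c_0)\,s\log p,
\end{equation*}
after which the constant $M=4(2+c_0)/c$ falls out of the arithmetic. The two sums are controlled separately. The signal sum $\sum_{j\in\I_{\gamma_\star}}Z_j^2$ is a $\chi^2_s$ variable, so a standard Laurent--Massart tail bound gives $\sum_{j\in\I_{\gamma_\star}}Z_j^2\le 5s$ on an event of probability at least $1-\rme^{-s}$, which lies within budget because $c_0>2$ forces $10s\le(6+2c_0)s\log p$; this produces the $\rme^{-s}$ term. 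For the null sum, each summand is nonzero only on $\{|Z_j|>\sqrt{1+z_\star}\}$, an event of probability at most $\rme^{-(1+z_\star)/2}\le p^{-c_0/2}$ by the Gaussian tail bound, and I would control the aggregate false-positive leakage over the at most $p$ null coordinates, which is where the second failure term $p^{-c_0 s/8}$ is produced.

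The main obstacle is precisely this control of the null leakage with the sharp exponent. A crude union bound over the null coordinates only yields failure probability of order $p^{1-c_0/2}$, without the factor $s$ in the exponent. To recover $p^{-c_0 s/8}$ I would instead bound the number $N$ of surviving null coordinates, a $\mathrm{Binomial}(p-s,q)$ variable with survival probability $q\le p^{-c_0/2}$, via $\PP(N\ge k)\le\binom{p}{k}q^k\le(pq)^k\le p^{k(1-c_0/2)}$, and combine this count with the per-survivor magnitude: since the leakage can exceed its budget $\asymp s\log p$ only if an order-$s$ number of nulls survive or a single survivor is anomalously large (each survivor already carrying $Z_j^2>1+z_\star\asymp\log p$), choosing $k$ proportional to $s$ and calibrating against the threshold level $z_\star=c_0\log p$ is what yields the exponent $c_0 s/8$. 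Making this trade-off precise, with all constants tracked, is the delicate part of the argument; the coordinate-wise bounds and the assembly of the constant $M$ are then routine.
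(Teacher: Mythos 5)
Your coordinate-wise reduction is essentially the paper's own: the same explicit formulas for $\tilde\gamma_{n,j}$ and $\tilde\beta_{n,j}$ under H\ref{A2}, the same four cases, and the same per-coordinate bounds, namely $2\sigma^2(1+z_\star+Z_j^2)/\|x_j\|^2$ for signal coordinates (kept or dropped) and $\sigma^2 Z_j^2\1\{Z_j^2>1+z_\star\}/\|x_j\|^2$ for null coordinates. Where you diverge is in the concentration step, and there your plan has a genuine gap: you allocate the failure probability $\rme^{-s}$ to the signal sum (Laurent--Massart, which is fine) and $p^{-c_0 s/8}$ to the null sum, whereas the paper does exactly the opposite. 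In the paper, the exponent $c_0 s/8$ comes from the \emph{signal} coordinates, by applying Bernstein's inequality to the event $\sum_{j\in\I_{\gamma_\star}}(Z_j^2-1)>z_\star s$ (a deviation at scale $z_\star s=c_0 s\log p$ is what produces $\exp(-z_\star s/8)=p^{-c_0s/8}$), while the \emph{null} sum is handled by a tilted exponential-moment (Chernoff) bound with tilt $\kappa=\log s/(c_0\log p)$, giving failure probability $\exp(-p^{c_0\kappa})=\rme^{-s}$; this is where the hypothesis $\log s\geq 1$ enters.

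The gap is not merely that your binomial-counting step is "delicate"; the null-side inequality you are trying to prove is false for $c_0$ close to $2$. Write $T\eqdef\sum_{j\notin\I_{\gamma_\star}}Z_j^2\1\{Z_j^2>1+z_\star\}$ and let $q\eqdef\PP(Z^2>1+z_\star)\asymp p^{-c_0/2}/\sqrt{\log p}$. The dominant way for $T$ to exceed $\lambda s\log p$ is for $k\approx \lambda s/c_0$ nulls to land just above the threshold, an event of probability of order $\binom{p}{k}q^k\approx (pq)^k/k!$, i.e.\ roughly $p^{-k(c_0/2-1)}$ up to factors that are lower order when, say, $s$ is polylogarithmic in $p$. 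Forcing this below $p^{-c_0s/8}$ requires $k(c_0/2-1)\geq c_0s/8$, i.e.\ $k\geq \tfrac{c_0 s}{4(c_0-2)}$, but then the sum is already at least $k(1+z_\star)\geq \tfrac{c_0^2}{4(c_0-2)}\,s\log p$, a constant that diverges as $c_0\downarrow 2$ and so cannot fit inside any budget of the form $(6+2c_0)s\log p$. Concretely, at $c_0=2.1$ one has $\PP[T>(6+2c_0)s\log p]$ of order $p^{-(6+2c_0)(1/2-1/c_0)s}=p^{-0.243\,s(1+o(1))}$, which exceeds $p^{-c_0s/8}=p^{-0.2625\,s}$; no refinement of the counting argument can close this. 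The same computation shows why your count-times-maximum bound overshoots by a factor of $s$: survivor counts and survivor magnitudes must be controlled \emph{simultaneously}, which is precisely what the paper's tilted moment bound $\PE[\exp(\kappa Z^2\1\{Z^2>1+z_\star\})]\leq 1+\rme^{-z_\star(1-2\kappa)/2}/(1-2\kappa)$ accomplishes, at the price of settling for the larger (but sufficient) failure probability $\rme^{-s}$ on the null side. The repair is therefore to swap your allocation: run Bernstein on the signal sum at deviation $z_\star s$ to harvest $p^{-c_0s/8}$, and give the null sum the $\rme^{-s}$ budget via the Chernoff bound with $\kappa=\log s/(c_0\log p)$.
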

\begin{proof}See Section \ref{proof:thm1}. \end{proof}

We deduce the following corollary. For $u\in\rset^p$, $\textsf{sign}(u)=(s_1,\ldots,s_p)$ where for each $i$, $s_i=0$ if $u_i=0$, $s_i=1$ is $u_i>0$, and $s_i=-1$ if $u_i<0$.

\begin{corollary}\label{coro1}
In addition to the assumptions of Theorem \ref{thm1}, suppose that 
\begin{equation}\label{cond:signal}
\min_{\{j:\; |\beta_{\star,j}|>0\}} |\beta_{\star,j}|> \sqrt{\frac{M\sigma^2s\log p}{n}}.\end{equation}
Then with probability at least $1-\frac{1}{p^{\frac{c_0s}{8}}} -\frac{1}{\exp(s)}-\frac{1}{p^{\frac{c_0}{2}-1}}$, $\textsf{sign}(\tilde\beta_n)=\textsf{sign}(\beta_\star)$.
\end{corollary}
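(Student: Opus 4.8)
The plan is to decompose the sign-recovery event into two requirements and control each separately. Note that $\textsf{sign}(\tilde\beta_n)=\textsf{sign}(\beta_\star)$ holds precisely when (i) $\tilde\beta_{n,j}$ shares the sign of $\beta_{\star,j}$ for every support coordinate $j\in\{j:\beta_{\star,j}\neq 0\}$, and (ii) $\tilde\beta_{n,j}=0$ for every null coordinate $j$ with $\beta_{\star,j}=0$. I would dispose of (i) as a direct consequence of Theorem \ref{thm1}: on the event where $\norm{\tilde\beta_n-\beta_\star}_2^2\leq M\sigma^2 s\log(p)/n$, each coordinate obeys $|\tilde\beta_{n,j}-\beta_{\star,j}|\leq \norm{\tilde\beta_n-\beta_\star}_2\leq \sqrt{M\sigma^2 s\log(p)/n}$, so the signal condition (\ref{cond:signal}) forces $|\tilde\beta_{n,j}-\beta_{\star,j}|<|\beta_{\star,j}|$ on the support, which guarantees $\tilde\beta_{n,j}$ has the same sign as $\beta_{\star,j}$ (in particular is nonzero). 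Thus (i) is met on the high-probability event of Theorem \ref{thm1}, at no cost beyond its failure probability $p^{-c_0 s/8}+e^{-s}$.

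The genuinely new work is requirement (ii), the control of false positives. Since $\tilde\beta_{n,j}=0$ whenever $\tilde\gamma_{n,j}=0$, it suffices to bound the probability that $\tilde\gamma_{n,j}\neq 0$ for some null $j$. Here I would exploit the orthogonality H\ref{A2}: because $x_j\perp x_k$ for $k\neq j$, one checks $C_j x_j=\sigma^2 x_j$, whence $x_j'C_j^{-1}x_j=\sigma^{-2}\norm{x_j}^2$ and $x_j'C_j^{-1}y=\sigma^{-2}x_j'y$ irrespective of the other components of $\hat\gamma$. Substituting into the characterization of Proposition \ref{prop1} gives, for a null coordinate, $\hat\gamma_{n,j}=\sigma^2(Z_j^2-1)_+/\norm{x_j}^2$ with $Z_j=x_j'\epsilon/(\sigma\norm{x_j})\sim \textbf{N}(0,1)$, exactly as anticipated in Section \ref{sec:existence}. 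Comparing against the threshold $\hat\sigma_n^2 z_\star/\norm{x_j}^2=\sigma^2 z_\star/\norm{x_j}^2$ (recall $\sigma^2_\star$ is known) shows that $\tilde\gamma_{n,j}\neq 0$ if and only if $Z_j^2>1+z_\star=1+c_0\log p$.

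The remaining step is a standard tail estimate: $\PP(Z_j^2>1+c_0\log p)\leq e^{-(1+c_0\log p)/2}\leq p^{-c_0/2}$, and a union bound over the at most $p$ null coordinates yields $\PP(\exists\, j:\ \beta_{\star,j}=0,\ \tilde\gamma_{n,j}\neq 0)\leq p\cdot p^{-c_0/2}=p^{-(c_0/2-1)}$. A final union bound combining the failure probabilities of (i) and (ii) produces the advertised bound $1-p^{-c_0 s/8}-e^{-s}-p^{-(c_0/2-1)}$. The main obstacle is the false-positive analysis of step (ii): it is not a priori clear that the coordinate-wise law of $\hat\gamma_{n,j}$ decouples from the remaining components of $\hat\gamma$, and the argument leans essentially on H\ref{A2} to make $x_j$ an eigenvector of $C_j$, so that $\hat\gamma_{n,j}$ depends on the data only through $x_j'y$; once this decoupling is secured, the tail and union bounds are routine.
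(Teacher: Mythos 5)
Your proposal is correct and follows essentially the same route as the paper's proof: the support coordinates are handled exactly as in the paper (coordinate-wise error bounded by the $\ell_2$ error from Theorem \ref{thm1}, combined with the signal condition (\ref{cond:signal})), and the null coordinates are handled by observing that a false positive forces $Z_j^2>1+z_\star$, followed by the same Gaussian tail bound and union bound over $p$ coordinates yielding $p^{-(c_0/2-1)}$. The only cosmetic difference is that you re-derive the decoupled form $\hat\gamma_{n,j}=\sigma^2(Z_j^2-1)_+/\norm{x_j}^2$ from Proposition \ref{prop1} and orthogonality, whereas the paper simply reuses this fact as already established in the proof of Theorem \ref{thm1}.
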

\begin{proof}See Section \ref{proof:coro1}. \end{proof}

\subsection{Computing $\hat\sigma_n^2$ and $\hat\gamma_n$} \label{sec:computing}
Here we address the issue of solving (\ref{opt:probl}).  Because the function $\ell(\sigma^2,\gamma)$ is not concave, and typically attains its maximum at the boundary of the domain $\Theta$, the optimization  (\ref{opt:probl}) is not a smooth problem. The strategy originally developed by \cite{tipping01} focuses instead on the smooth problem obtained by maximizing  $\ell$ over the open domain $\rset_+^{p+1}$, where $\rset_+\eqdef (0,\infty)$. That is, find
\begin{equation}\label{probl2}
\displaystyle\textsf{Argmax}_{(\sigma^2,\gamma)\in\rset_+^{p+1}}\;\ell(\sigma^2,\gamma).\end{equation}
Of course,  this latter problem  has no solution whenever  the solution of (\ref{opt:probl}) occurs at the boundary of $\Theta$. Nevertheless, we will see that an EM algorithm that attempts to solve (\ref{probl2})  produces sequences that converge to the solution of (\ref{opt:probl}).

Since the likelihood function $\exp(\ell)$ of $(\sigma^2,\gamma)$ is obtained by integrating out $\beta$, we can treat $\beta$ as a missing variable and use the EM algorithm as proposed by \cite{tipping01}. For $\gamma\in\rset_+^p$, the so-called complete log-likelihood takes the form
\[\ell_\textsf{com}(\beta,\sigma^2,\gamma\vert y)=-\frac{n}{2}\log\sigma^2 -\frac{1}{2\sigma^2}\|y-X\beta\|^2-\frac{1}{2}\sum_{j=1}^p\left(\log\gamma_j + \frac{\beta_j^2}{\gamma_j}\right).\]
Given a working solution $(\{\sigma^2\}^{(k)},\gamma^{(k)})\in \rset_+^{p+1}=(0,\infty)^{p+1}$,  set
\[Q(\sigma^2,\gamma\vert \{\sigma^2\}^{(k)},\gamma^{(k)})\eqdef \int \ell_\textsf{com}(\beta,\sigma^2,\gamma\vert y)\pi_n(d\beta\vert y, \{\sigma^2\}^{(k)},\gamma^{(k)}),\]
the so-called $Q$-function. We will use the upper-script $(k)$ to index sequences generated by the EM algorithm. Set $V^{(k)}=\left(X'X+\{\sigma^2\}^{(k)}\{\Gamma^{(k)}\}^{-1}\right)^{-1}$, and $\mu^{(k)}=V^{(k)}X'y$, where $\Gamma^{(k)}=\textsf{diag}(\gamma^{(k)}_{1},\ldots,\gamma^{(k)}_{p})$. Maximizing $Q(\cdot\vert  \{\sigma^2\}^{(k)},\gamma^{(k)})$ is easy and gives 
\begin{eqnarray*}
\{\sigma^2\}^{(k+1)}&=&\frac{1}{n}\int \|y-X\beta\|^2\pi_n(\beta\vert \{\sigma^2\}^{(k)},\gamma^{(k)},y)\rmd\beta\\
&=&n^{-1}\left(\|y-X\mu^{(k)}\|^2+\{\sigma^2\}^{(k)}\textsf{Tr}(V^{(k)}X'X)\right),\end{eqnarray*}
and for $j=1,\ldots,p$,
\[\gamma_{j}^{(k+1)}=\{\mu_{j}^{(k)}\}^2+\{\sigma^2\}^{(k)}V_{j,j}^{(k)}.\]
This leads to the following algorithm for solving (\ref{probl2})
\begin{algorithm}[EM algorithm]\label{EMalgo}
Given $(\{\sigma^2\}^{(k)},\gamma^{(k)})\in \rset_+^{p+1}=(0,\infty)^{p+1}$, we compose the matrix $\Gamma^{(k)}=\textsf{diag}(\gamma_{1}^{(k)},\ldots,\gamma_{p}^{(k)})$.
\begin{enumerate} 
\item  Compute $V^{(k)}=\left(X'X+\{\sigma^2\}^{(k)}\{\Gamma^{(k)}\}^{-1}\right)^{-1}$, and $\mu^{(k)}=V^{(k)}X'y$.
\item Set  
\[\gamma_{j}^{(k+1)}=\{\mu_{j}^{(k)}\}^2+\{\sigma^2\}^{(k)}V_{j,j}^{(k)},\;\;j=1,\ldots,p,\] 
\[\{\sigma^2\}^{(k+1)}=\frac{1}{n}\left(\|y-X\mu^{(k)}\|^2+\{\sigma^2\}^{(k)}\textsf{Tr}(V^{(k)}X'X)\right).\]
\end{enumerate}
\end{algorithm}

Although this EM algorithm is designed to solve the maximization problem (\ref{probl2}) we will see that it typically converges to the solution of (\ref{opt:probl}). To simplify the analysis we assume again that H\ref{A2} holds and that $\sigma^2$ is fixed. Hence we focus only on the recursion in $\gamma$:
\[\gamma_{j}^{(k+1)}=\{\mu_{j}^{(k)}\}^2+\sigma^2V_{j,j}^{(k)},\;\;j=1,\ldots,p.\]
With the assumption that  the design matrix is orthogonal, we can work out explicitly the terms $V^{(k)}=\left(X'X+\sigma^2\{\Gamma^{(k)}\}^{-1}\right)^{-1}$ and $\mu^{(k)}=V^{(k)}X'y$, which leads to
\begin{equation}\label{recEM}
\gamma_{j}^{(k+1)}=\frac{\seq{x_j,y}^2}{\left(\|x_j\|^2+\frac{\sigma^2}{\gamma_{j}^{(k)}}\right)^2} + \frac{\sigma^2}{\|x_j\|^2+\frac{\sigma^2}{\gamma_{j^{(k)}}}},\;\;j=1,\ldots,p.\end{equation}

\begin{proposition}\label{prop2}
Fix $y\in\rset^n$, and $X\in\rset^{n\times p}$ such that H\ref{A2} holds. Fix $\sigma^2>0$. Let $\{\gamma^{(k)},\; k\geq 0\}$ denote the sequence produced by the recursion (\ref{recEM}) for some initial $\gamma^{(0)}$ with positive components. Then for all $j\in\{1,\ldots,p\}$,
\[\lim_{k\to \infty}\gamma_{j}^{(k)}=\hat\gamma_{n,j}=\left\{\begin{array}{ll}\frac{\seq{y,x_j}^2-\sigma^2\|x_j\|^2 }{\|x_j\|^2}& \mbox{ if } \; \seq{y,x_j}^2>\sigma^2\|x_j\|^2 \\ 0 & \mbox{ otherwise }\end{array}\right..\]
\end{proposition}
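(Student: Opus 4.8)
The plan is to use that, under H\ref{A2}, the recursion (\ref{recEM}) decouples completely across coordinates, so that it suffices to fix $j$ and analyze a single scalar recursion. Writing $a\eqdef\|x_j\|^2>0$ and $b\eqdef\seq{x_j,y}^2\geq0$, the update $\gamma_j^{(k+1)}=f(\gamma_j^{(k)})$ is driven by the one-dimensional map
\[
f(g)=\frac{b}{\left(a+\sigma^2/g\right)^{2}}+\frac{\sigma^2}{a+\sigma^2/g},\qquad g>0 .
\]
I would then study the dynamics of $f$ on $(0,\infty)$ and show that the orbit of every positive initial point converges to the value in the statement.

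The first step is to pin down the fixed points and the direction of motion at once, through the identity
\[
f(g)-g=\frac{-a^{2}g^{2}\,(g-g^\star)}{(ag+\sigma^2)^{2}},\qquad g^\star\eqdef\frac{b-a\sigma^2}{a^{2}},
\]
which I would obtain by first rewriting $f(g)=g\big[(b+a\sigma^2)g+\sigma^4\big]/(ag+\sigma^2)^2$ and then subtracting $g$. Consequently the only nonnegative fixed points are $g=0$ and, precisely when $b>a\sigma^2$, the point $g^\star>0$, which is the limit asserted in the proposition; the same value is recovered by specializing Proposition \ref{prop1} to H\ref{A2} through $C_j^{-1}x_j=\sigma^{-2}x_j$, a useful internal check. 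Since the prefactor $-a^2g^2/(ag+\sigma^2)^2$ is strictly negative on $(0,\infty)$, the sign of $f(g)-g$ is exactly that of $g^\star-g$, so the iterates are always pushed monotonically toward $g^\star$. In parallel I would record that $f$ is strictly increasing on $(0,\infty)$, since as $g$ grows $a+\sigma^2/g$ decreases and hence both terms of $f$ increase.

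The convergence then follows from a monotone-map argument split into the two cases of the statement. If $b>a\sigma^2$, strict monotonicity of $f$ together with $f(g^\star)=g^\star$ makes each interval $(0,g^\star)$ and $(g^\star,\infty)$ forward-invariant; on either one the orbit is monotone and bounded, so it converges, and by continuity the limit is a fixed point lying in the closure of that interval, which can only be $g^\star$. If instead $b\leq a\sigma^2$, then $g^\star\leq0$, so $f(g)<g$ for every $g>0$ while $f(g)>0$; the orbit is therefore strictly decreasing and stays positive, hence converges to the unique nonnegative fixed point, namely $0$. Reading these two cases against the definition of $g^\star$ yields the claimed formula.

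The only genuinely delicate point is the forward-invariance of the two intervals when $b>a\sigma^2$, i.e.\ ruling out that some iterate jumps across $g^\star$. This is exactly where strict monotonicity of $f$ is used: it gives $f(g)<f(g^\star)=g^\star$ for $g<g^\star$ and $f(g)>g^\star$ for $g>g^\star$, trapping the orbit on the correct side of $g^\star$ for all $k$. Once this trapping is in place, positivity and boundedness of the iterates and the identification of the limit as a fixed point (via continuity of $f$) are routine.
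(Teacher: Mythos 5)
Your proof is correct, and while it shares the paper's overall strategy---under H\ref{A2} the EM recursion decouples into independent scalar iterations, which are then handled by a monotone-iteration argument---its technical core is genuinely different. The paper rescales to $x_k=\|x_j\|^2\gamma_j^{(k)}$, writes $x_{k+1}=\Psi(x_k)$ with $\Psi(x)=B\left(x/(\sigma^2+x)\right)^2+\sigma^2x/(\sigma^2+x)$ and $B=\seq{y,x_j}^2/\|x_j\|^2$, and locates $\Psi$ relative to the diagonal by calculus: when $B\le\sigma^2$ it checks $\Psi''\le 0$, so that concavity together with $\Psi(0)=0$, $\Psi'(0)=1$ gives $\Psi(x)\le x$; when $B>\sigma^2$ it uses $\Psi''(0)>0$ and a Taylor expansion at the origin, and then argues the orbit is monotone on each side of the positive fixed point $x_\star=B-\sigma^2$. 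Your factorization $f(g)-g=-a^2g^2(g-g^\star)/(ag+\sigma^2)^2$ obtains all of this sign information in a single algebraic step, with no second derivatives or local expansions. Moreover, your explicit appeal to strict monotonicity of $f$ to make $(0,g^\star)$ and $(g^\star,\infty)$ forward-invariant supplies a step the paper leaves implicit: asserting that an orbit started in $[0,x_\star]$ is increasing presupposes that it never jumps across $x_\star$, and ruling that out requires exactly the monotonicity you prove (it does follow from the paper's displayed formula for $\Psi'$, which is manifestly positive, but the paper never invokes it). So your route is, if anything, cleaner and more self-contained.

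One caveat concerns the statement rather than your argument: your fixed point $g^\star=(b-a\sigma^2)/a^2$ carries $\|x_j\|^4$ in the denominator, whereas the proposition as printed has $\|x_j\|^2$. Your value is the correct one: it agrees with Proposition \ref{prop1} specialized via $C_j^{-1}x_j=\sigma^{-2}x_j$ (your own internal check), and it is also what the paper's proof yields once the rescaling is undone, since $x_k\to B-\sigma^2$ implies $\gamma_j^{(k)}=x_k/\|x_j\|^2\to\left(\seq{y,x_j}^2-\sigma^2\|x_j\|^2\right)/\|x_j\|^4$. The printed display is thus a typo (it gives the limit of $\|x_j\|^2\gamma_j^{(k)}$, not of $\gamma_j^{(k)}$), and your sentence claiming that $g^\star$ ``is the limit asserted in the proposition'' should be read with that correction in mind.
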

\begin{proof}
See Section \ref{proof:prop2}
\end{proof}

\begin{remark}
In the non-orthogonal design setting, our simulation results suggest that the conclusion of Proposition \ref{prop1} continues to hold, although we do not have any rigorous proof.
\end{remark}

\subsection{A simulation study}\label{sec:sim}
\subsubsection{Synthetic Data Sets}
We investigate by simulation the behavior of the SBL procedure and its thresholding version, and how they compare with lasso. For all the simulations, $n=100$ and $p=500$. We generate the design matrix $X$ by simulating each row independently from the Gaussian distribution $\textbf{N}(0,\Sigma)$ where $\Sigma_{ii}=1$ and $\Sigma_{ij}=\rho$ for $i\neq j$. We consider two values of $\rho$: $\rho=0$ for which $X$ is close to satisfy H\ref{A2}, and $\rho=0.9$ which produces a design matrix $X$ with strongly correlated variables. We simulate the dependent variable $Y$ from the $\textbf{N}(X\beta_\star,\sigma^2_\star \mathbb{I}_n)$, with $\sigma_\star=1$. We consider four (4) different scenarios of sparsity, with $s=3,15,25$, and $s=50$ where $s$ is the number of non-zero elements of $\beta_\star$. The magnitude of the non-zero elements also play an important role in the recovery. We generate all the non-zeros components of $\beta_\star$ from the uniform distribution $\textbf{U}(a,a+1)$, for $a$ ranging from $0$ to $9$. 

For each value of $\rho$, each sparsity level, and each signal strength $a$, we repeat each estimator  $30$ times, and  we compute  the relative error rate ($\|\hat\beta-\beta_\star\|/\|\beta_\star\|$), the sensitivity and the specificity, averaged over these $30$ replications.  The sensitivity (\textsf{SEN}) and the specificity (\textsf{SPE}) of a given estimator $\hat \beta$ are defined as 
\[
\textsf{SEN}(\hat\beta)=\frac{\sum_{j=1}^p \textbf{1}_{\{\hat{\beta}_j \neq 0\}}\textbf{1}_{\{\beta_{\star,j} \neq 0\}}}{\sum_{j=1}^p\textbf{1}_{\{\beta_{\star,j} \neq 0\}}},\;\;\mbox{ and }\;\;\textsf{SPE}(\hat\beta)=\frac{\sum_{j=1}^p \textbf{1}_{\{\hat{\beta}_j \neq 0\}}\textbf{1}_{\{\beta_{\star,j} \neq 0\}}}{\sum_{j=1}^p \textbf{1}_{\{\hat{\beta}_j \neq 0\}}}.\]

These measures are valid for any estimator $\hat\beta$, and we compute them for the thresholded version of SBL, the non-thresholded version of SBL, as well as for the lasso estimator. For the thresholded SBL, we use $z_\star=c(1+|\hat\rho|)\log p$, where $c$ is determined by minimizing the BIC: $\frac{\|y - X\hat{\beta} \|}{2\hat{\sigma}^2} + s \log (n) $.

We compute the lasso estimator using  the function $cv.glmnet$ of the package GLMNet (\cite{glmnet}) where we select the penalty term $\lambda$ by a $10$-fold cross-validation procedure. In the cross-validation, the regulation parameter selected minimizes the prediction error.

The simulation results are presented on Figure 1-8. As one can see from these figures,  the main conclusion is that SBL is more sensitive than lasso to the strength of the signal (defined here as as the parameter $a$).  With a weak signal it performs poorly, but outperforms lasso when the signal is strong enough. Another interesting finding is that, overall, lasso performs  poorly in selecting the non-zeros components (variable selection). This is consistent with recent results (\cite{meinhausenetyu09}) which shows that variable selection consistency of lasso requires the irrepresentable condition, which actually is a very strong condition that often does not hold in practice. For instance, the irrepresentable condition fails for all the design matrices of this simulation study, except for the design matrix behind Figure 2.



\subsubsection{A simulated real data example}
In this example, we consider a micro-array data concerning genes involved in the production of riboflavin. The data is made publicly available at \begin{verbatim}
http://www.annualreviews.org/doi/suppl
                           /10.1146/annurev-statistics-022513-115545\end{verbatim}
 and contains $n = 71$ samples and $p = 4088$ covariates corresponding to $4088$ genes. Each of the sample contains a real valued response consisting of the logarithm of the riboflavin production rate and $4088$ real valued covariates consisting of the logarithm of the genes' expression levels. 

Given the very high dimensionality of this dataset, the lack of any true value of the parameter, and given also the fact that micro-array data are well-known to be very noisy, direct comparison of different regression methods on such dataset cannot be very insightful. For a more meaningful comparison, we use the riboflavin design matrix $X \in \mathbb{R}^{71 \times 4088}$ to generate simulated levels of riboflavin production rate using the sparse regression model $Y = X \beta + \epsilon$ where $\epsilon \sim N(0,\sigma^2\mathbb{I}_{71})$, with $\sigma^2=1$. The magnitude of the non-zero components of $\beta$ are uniformly simulated $\beta_j \sim U(a, a+1)$ with $a = \{0,1, \dots, 9\}$. We set the number of non-zeros elements in the vector $\beta$ to $5$. Figure 9 shows the results of the simulation evaluated using the aforementioned metrics. Under such extreme high-dimensional conditions, both methods perform poorly. SBL has found all the relevant variables but has also selected many non-relevant variables. Lasso has produced more sparse solutions, but has missed some important variables. The results remain essentially the same even when we set $\sigma^2$ (the variance of the noise $\epsilon$) to $0.1$.



\vspace{0.2cm}
One final word on computing times. We compute the SBL estimate using Algorithm \ref{EMalgo}, and we use the package GLMNet to compute lasso. We implemented Algorithm \ref{EMalgo} in \textsf{R}. The core of the GLMNet package is written in Fortran and the result is very fast. The comparison of the computing times is largely in favor of GLMNet. Comparing computing times is always tricky as it depends to a large extent on the programming language and skills. But beyond the implementation differences, it seems clear that lasso has a computational advantage over SBL in that it leads to ``easier" (convex) optimization problems, compared to SBL.

\section{Conclusion}\label{sec:conclusion}
We have shown that when the design matrix is orthogonal, the SBL estimator is uniquely defined, sparse (however does not recover the true sparsity structure of the signal), and can be computed using the EM algorithm. We have also proposed a hard-thresholded version of SBL, and shown that the hard-thresholded estimator recovers the true sparsity structure of the model, and achieves the same estimation error bound as lasso (with high probability). Furthermore our simulation results show that the method compares very well with lasso, and outperforms lasso when the regression coefficients are not too small.

One important and pressing  issue is the extension of these results to non-orthogonal design matrices. In particular we wish to understand the type of design matrix $X$ for which these results continue to hold. This SBL theory and its comparison with the recently developed lasso theory (see for instance \cite{meinhausenetyu09,bickeletal09}) could potentially give new insight into high-dimensional regression analysis. The generalized singular value decomposition (see e.g. \cite{golubetvanloan13}) of $X_\gamma$ and $\Gamma_\gamma$ seems to be a promising approach to tackle this problem. The challenge in this approach appears to be the development of an appropriate differentiability theory for the components of the GSVD decomposition as a function of $\gamma$.

The SBL method can be extended in several directions. It can be easily extended to deal with generalized linear models, and graphical models. But in these extensions, the computation of the estimator might require some new algorithms. Another possible extension of the method would be to replace the Gaussian distribution in the prior $\pi_\gamma$ by some other distribution. Some of these extensions of the methodology are already being explored. For instance \cite{balaetmadigan09} replaced the Gaussian distribution by the double-exponential distribution and shows by simulation that the resulting estimator (called demi-lasso) compares very well with lasso and the standard SBL.

\begin{center}
{\normalsize \scalebox{0.7}{\includegraphics{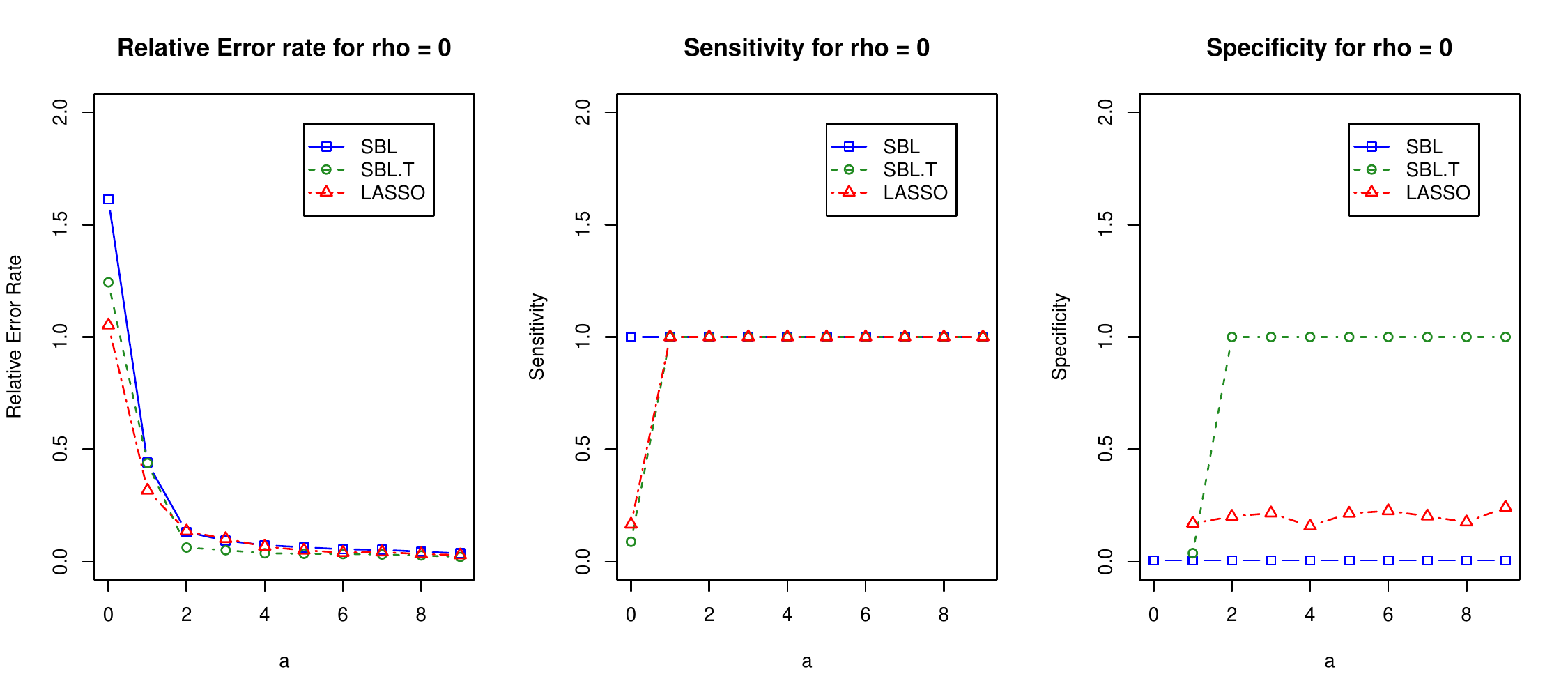}}\\[0pt]
\noindent\underline{Figure 1}: Sensitivity, specificity and relative error for SBL and lasso as function of $a$. $s=3$.}
\end{center}

\begin{center}
{\normalsize \scalebox{0.7}{\includegraphics{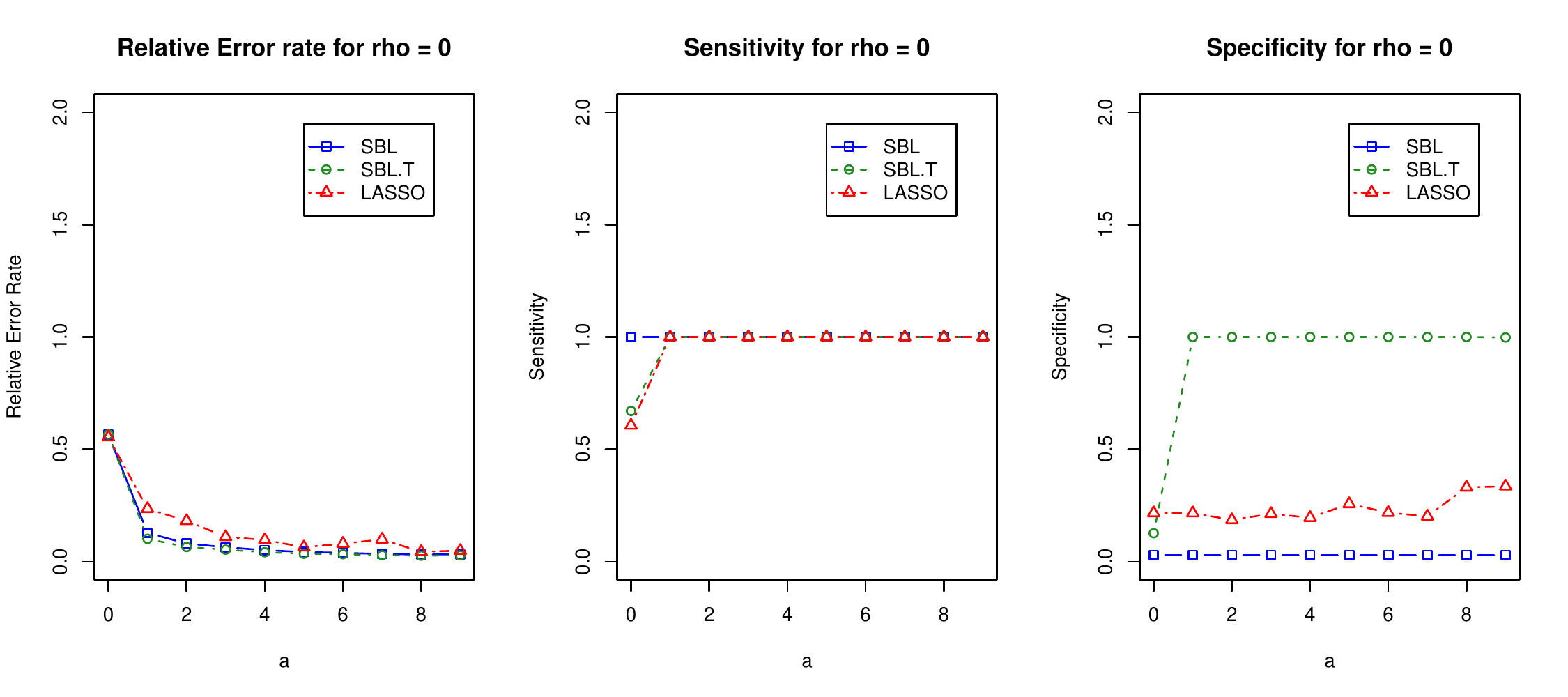}}\\[0pt]
\noindent\underline{Figure 2}: Sensitivity, specificity and relative error for SBL and lasso as function of $a$. $s=15$.}
\end{center}

\begin{center}
{\normalsize \scalebox{0.7}{\includegraphics{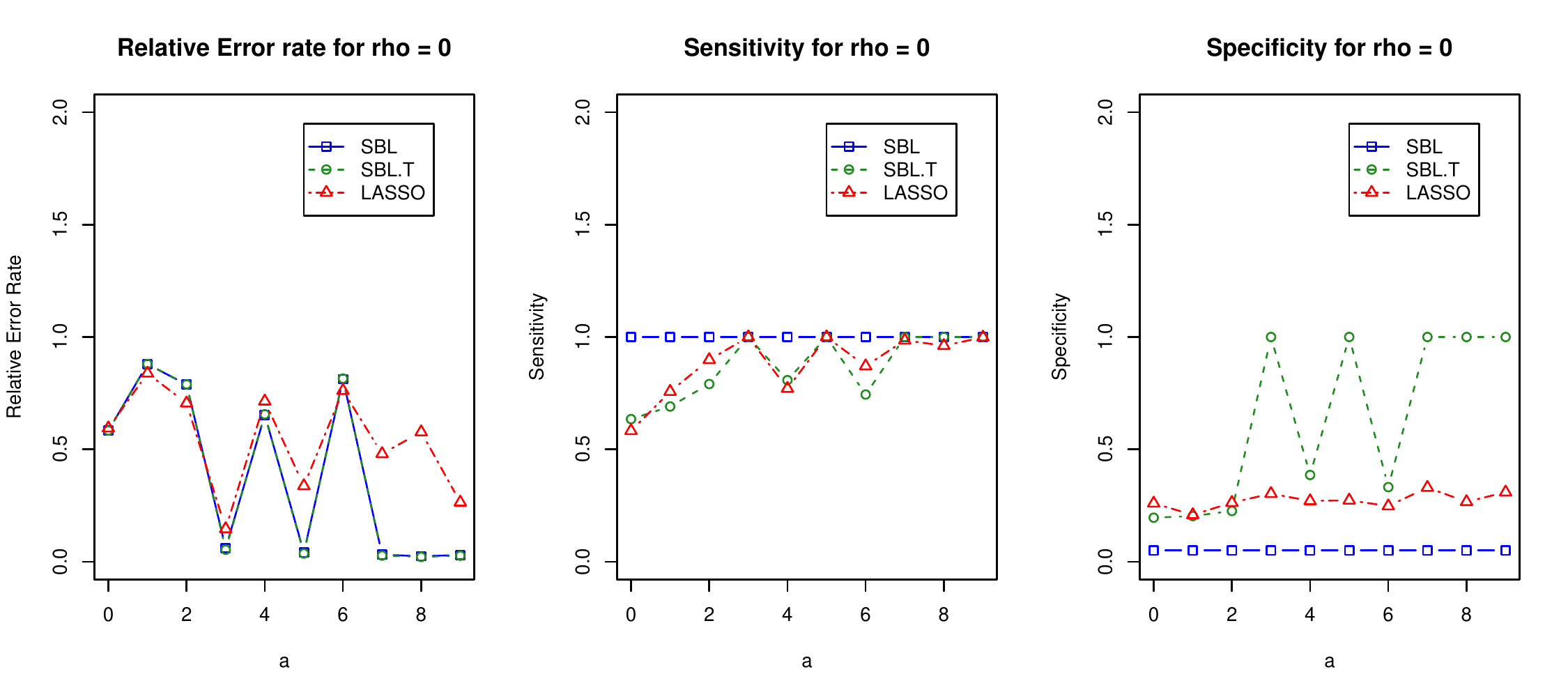}}\\[0pt]
\noindent\underline{Figure 3}: Sensitivity, specificity and relative error for SBL and lasso as function of $a$. $s=25$.}
\end{center}

\begin{center}
{\normalsize \scalebox{0.7}{\includegraphics{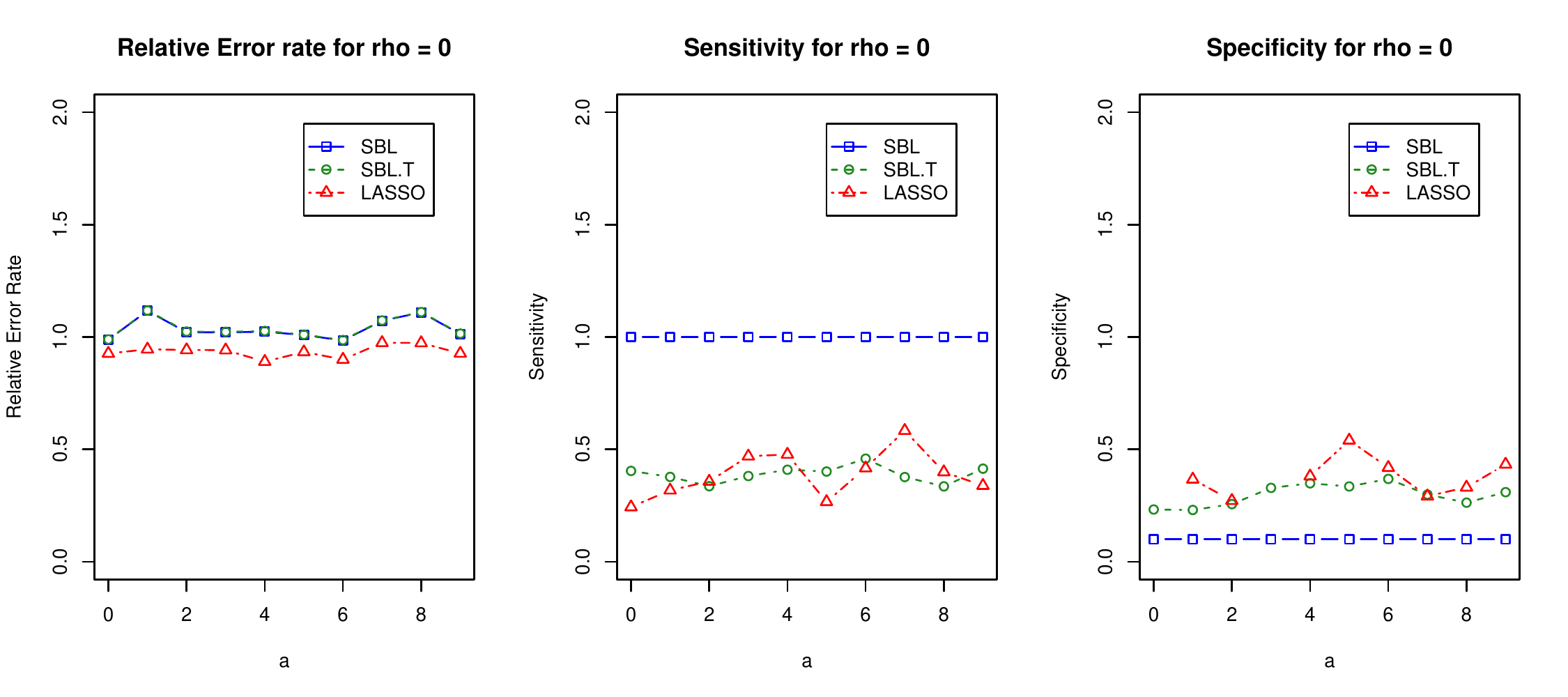}}\\[0pt]
\noindent\underline{Figure 4}: Sensitivity, specificity and relative error for SBL and lasso as function of $a$. $s=50$.}
\end{center}





\begin{center}
{\normalsize \scalebox{0.7}{\includegraphics{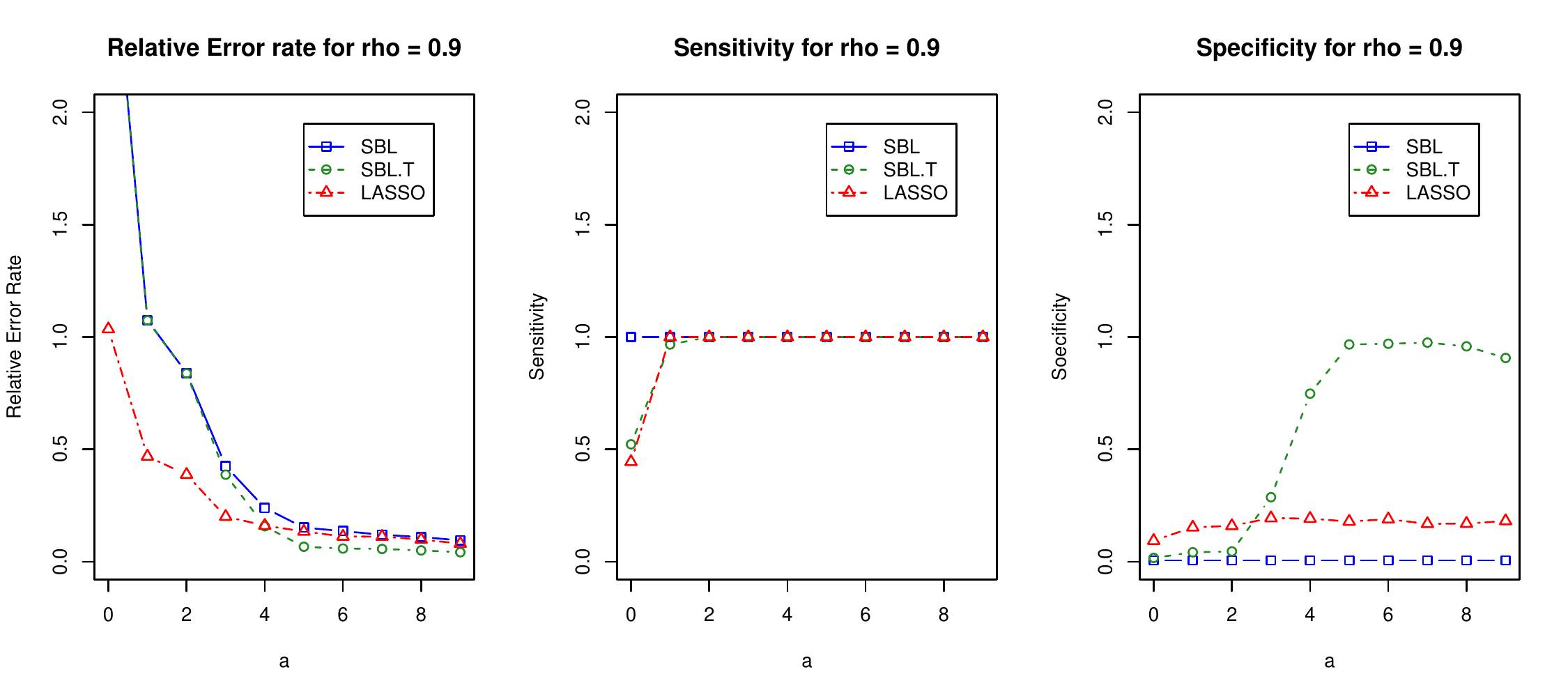}}\\[0pt]
\noindent\underline{Figure 5}: Sensitivity, specificity and relative error for SBL and lasso as function of $a$. $s=3$.}
\end{center}

\begin{center}
{\normalsize \scalebox{0.7}{\includegraphics{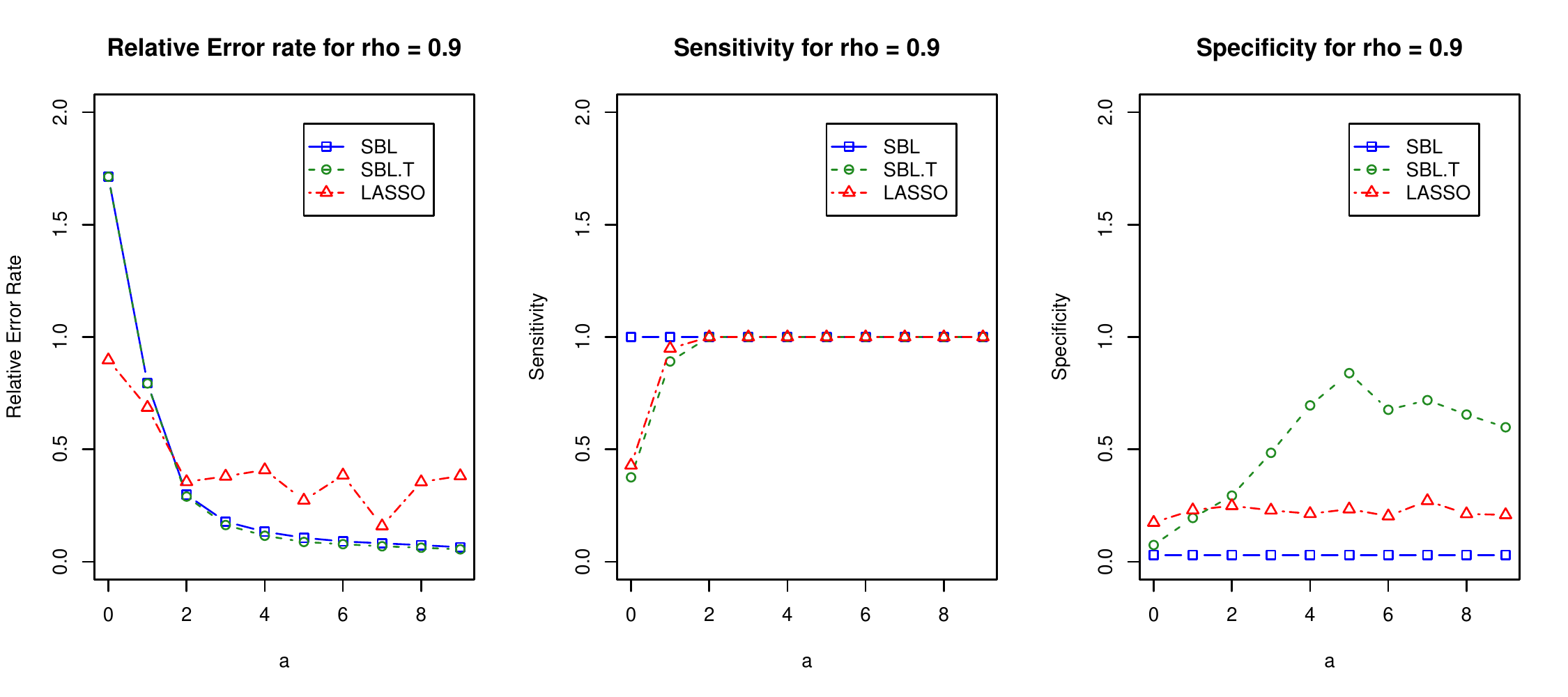}}\\[0pt]
\noindent\underline{Figure 6}: Sensitivity, specificity and relative error for SBL and lasso as function of $a$. $s=15$.}
\end{center}

\begin{center}
{\normalsize \scalebox{0.7}{\includegraphics{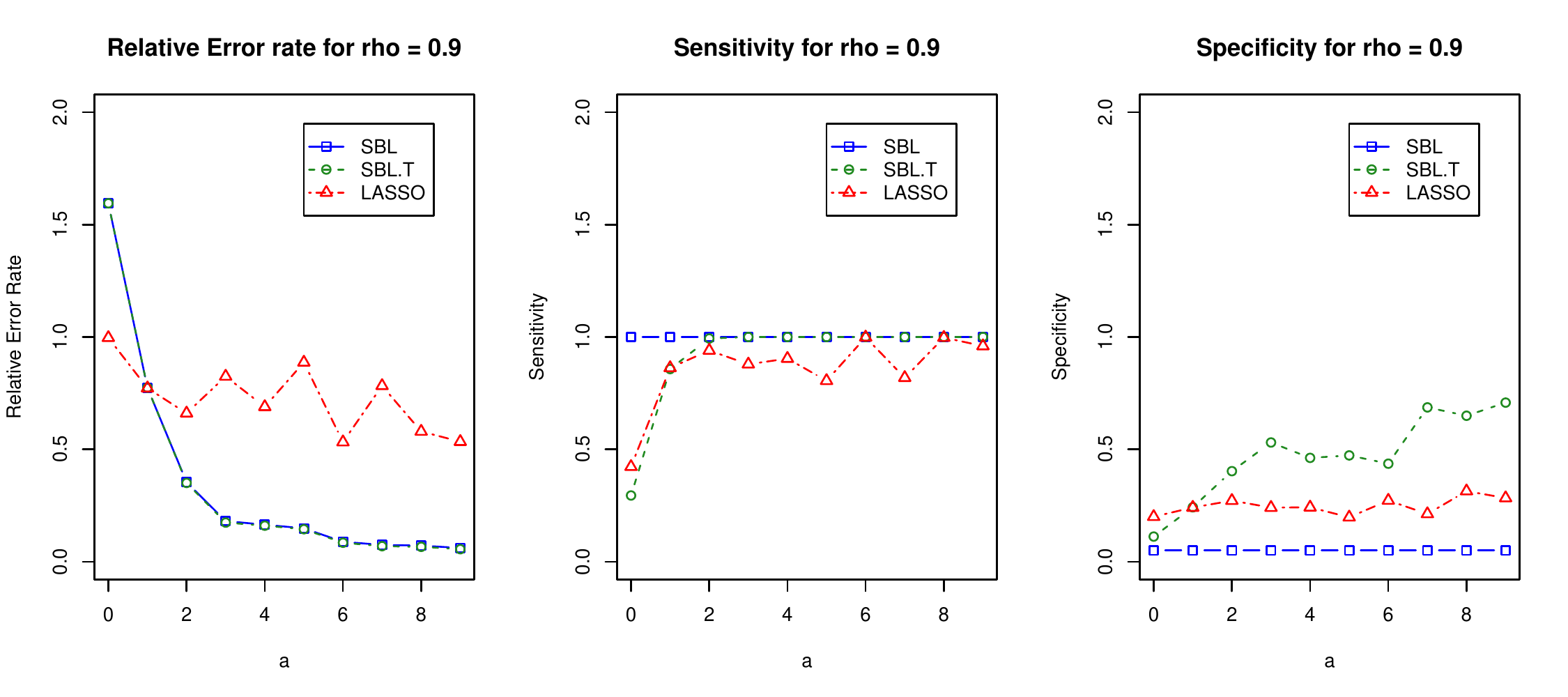}}\\[0pt]
\noindent\underline{Figure 7}: Sensitivity, specificity and relative error for SBL and lasso as function of $a$. $s=25$.}
\end{center}

\begin{center}
{\normalsize \scalebox{0.7}{\includegraphics{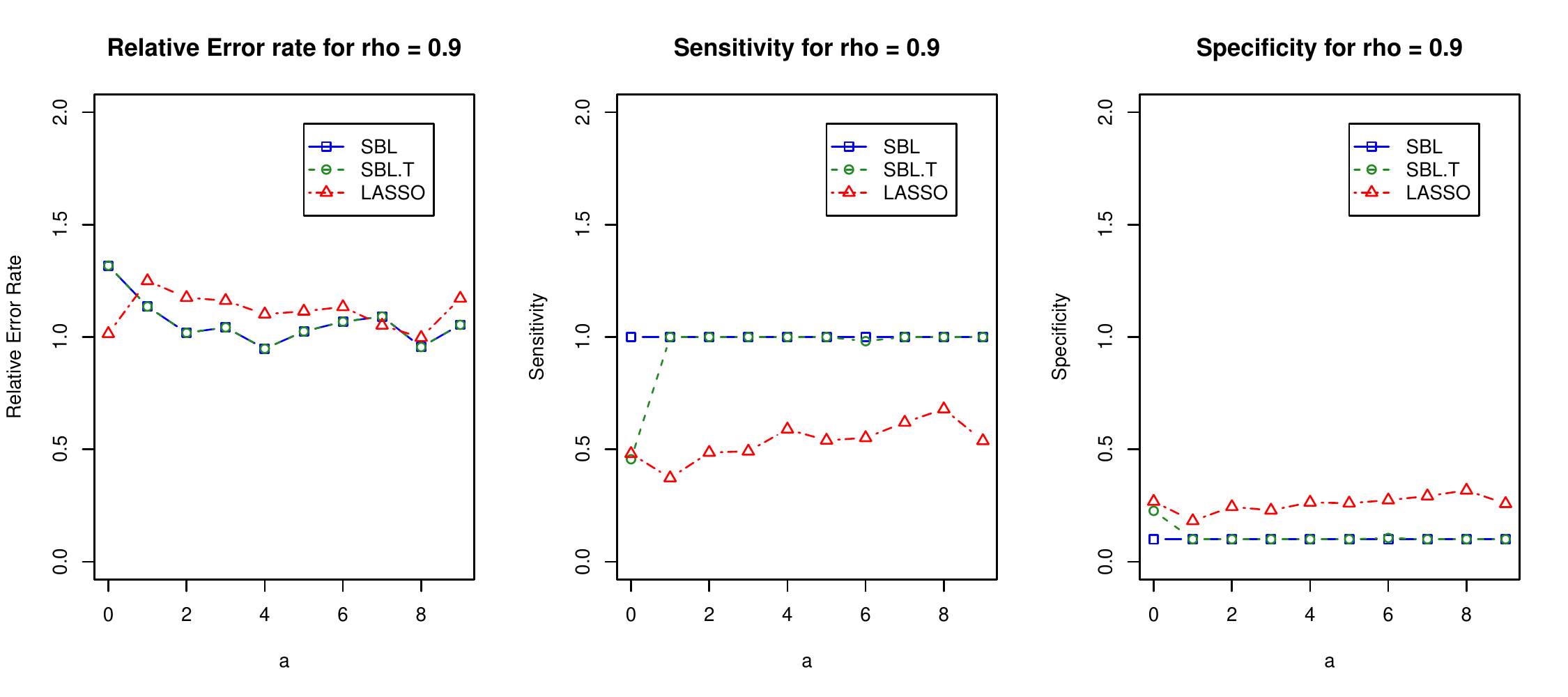}}\\[0pt]
\noindent\underline{Figure 8}: Sensitivity, specificity and relative error for SBL and lasso as function of $a$. $s=50$.}
\end{center}

\begin{center}
{\normalsize \scalebox{0.7}{\includegraphics{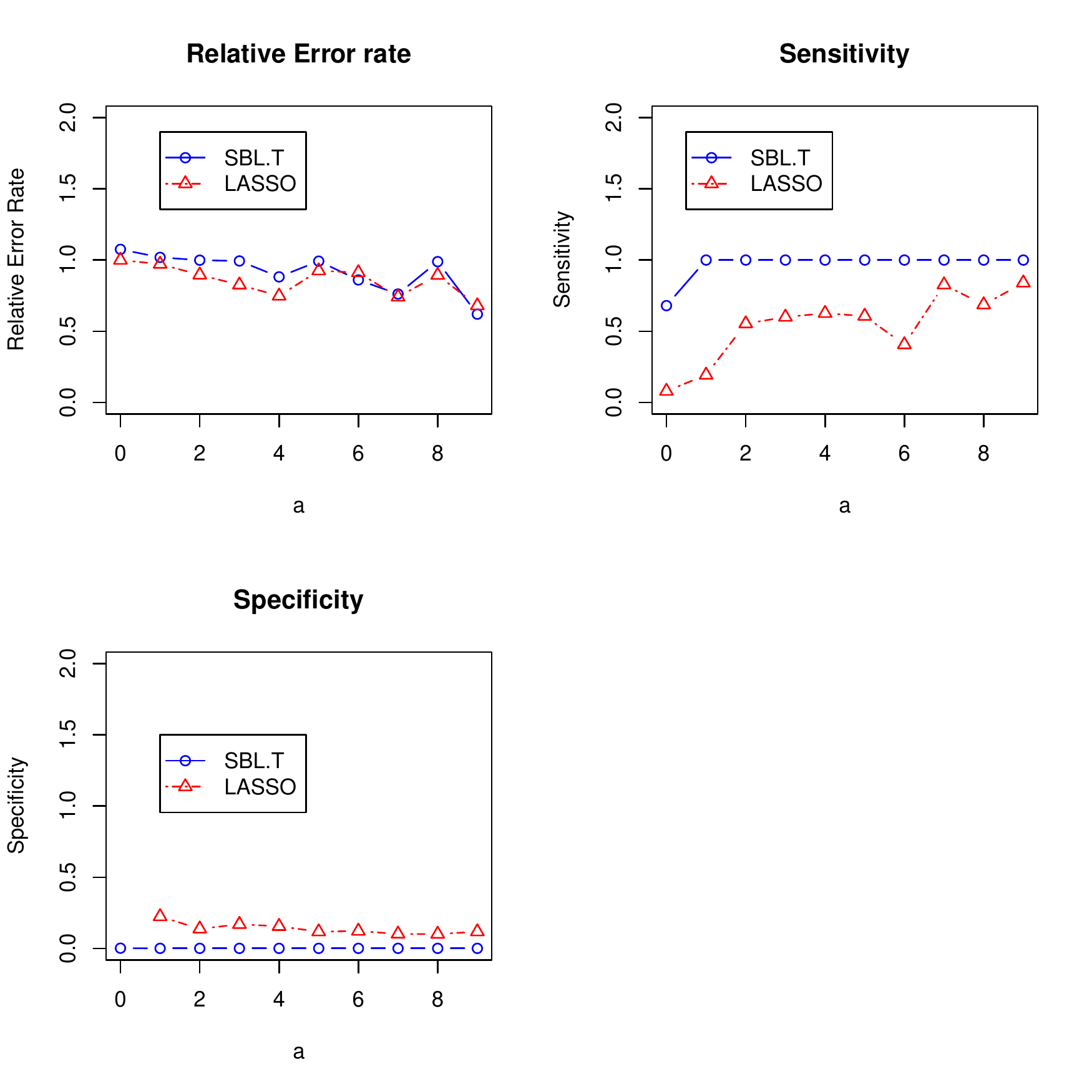}}\\[0pt]
\noindent\underline{Figure 9}: Sensitivity, specificity and relative error for SBL(thresholded) and lasso as function of $a$. $s=5$. The results of this simulation is generated using the riboflavin data}
\end{center}





\section{Proofs}\label{sec:proofs}
\subsection{Proof of Proposition \ref{prop1}}\label{proof:prop1}
\begin{proof}
For any $i\in\{1,\ldots,p\}$,
\[\ell(\gamma)\leq -\frac{1}{2}\log\det(C_\gamma)\leq -\frac{1}{2}\log\det\left(\sigma^2\Id_n +\gamma_{i}x_{i}x_{i}'\right)\downarrow-\infty,\]
as $\gamma_{i}\to\infty$. This together with the continuity of $\gamma\mapsto \ell(\gamma)$ imply the existence of a maximizer. For any such maximizer $\hat\gamma$, consider the vector $\gamma$ such that the $j$-th component of $\gamma$ is free to vary and the remaining components $\gamma_{-j}$ are fixed to  $\hat\gamma_{-j}$. Then we write $C_\gamma=C_j + \gamma_jx_jx_j'$ and use the matrix identity $(A+uu')^{-1}=A^{-1} -\frac{A^{-1}uu'A^{-1}}{1+u'A^{-1}u}$ to deduce that 
\[C_\gamma^{-1} =C_j^{-1} -\frac{\gamma_jC_j^{-1}x_jx_j'C_j^{-1}}{1+\gamma_j x_j'C_j^{-1}x_j}.\]
Therefore,
\[\ell(\gamma)=-\frac{1}{2}\log\det\left(C_j + \gamma_jx_jx_j'\right) +\frac{1}{2} \frac{\gamma_j\left(x_j'C_j^{-1}y\right)^2}{1+\gamma_j x_j'C_j^{-1}x_j}-\frac{1}{2}y'C_j^{-1}y.\]
Since $C_j$ does not depend on $\gamma_j$, we easily see that $\gamma_j\mapsto \ell(\gamma)$ is differentiable on $(0,\infty)$ and
\begin{eqnarray*}
\frac{\partial}{\partial \gamma_j}\ell(\gamma) &=&-\frac{1}{2} x_j'C_j^{-1}x_j +\frac{1}{2}\frac{\gamma_j\left(x_j'C_j^{-1}x_j\right)^2}{1+\gamma_jx_j'C_j^{-1}x_j} + \frac{1}{2}\frac{\left(x_j'C_j^{-1}y\right)^2}{\left(1+\gamma_jx_j'C_j^{-1}x_j\right)^2}\\
&=&\frac{1}{2}\frac{\left(x_j'C_j^{-1}y\right)^2 -\gamma_j\left(x_j'C_j^{-1}x_j\right)^2-x_j'C_j^{-1}x_j }{\left(1+\gamma_jx_j'C_j^{-1}x_j\right)^2}
\end{eqnarray*}
If $\left(x_j'C_j^{-1}y\right)^2\leq x_j'C_j^{-1}x_j$,  $\frac{\partial}{\partial \gamma_j}\ell(\gamma)<0$ and $\gamma_j\mapsto \ell(\gamma)$ attains its maximum at $0$. Similarly if $\left(x_j'C_j^{-1}y\right)^2> x_j'C_j^{-1}x_j$, it is easy to check that $\gamma_j\mapsto \ell(\gamma)$ attains its maximum at $(\left(x_j'C_j^{-1}y\right)^2-x_j'C_j^{-1}x_j)/\left(x_j'C_j^{-1}x_j\right)^2$.  Now if $\hat\gamma_j$ differs from the maximizer just found, we can improve on the likelihood by setting $\hat\gamma_{j}$ equal to that maximizer,  which would be a contradiction. Hence the result.

\end{proof}

\subsection{Proof of Proposition \ref{prop:prob}}\label{proof:prop:prob}
\begin{proof}
Recall that $\I=\{1\leq j\leq p:\;\beta_{\star,j}\neq 0\}$ is the sparsity structure of $\beta_\star$. For $\gamma\in\Theta$, and $1\leq j\leq p$, we define $\I_0\eqdef \I\cap \I_\gamma\setminus\{j\}$, and $\I_1\eqdef \I^c\cap \I_\gamma\setminus\{j\}$, where in order to keep  the notation easy, we omit the dependence of $\I_0$ and $\I_1$ on $(\gamma,j)$. We will also write $X_{\I_0}$ (resp. $X_{\I_1}$) to denote the matrix obtained by collecting the columns of $X$ whose indexes belong to $\I_0$ (resp. $\I_1$).  We define 
\begin{eqnarray*}C_{j,\gamma}&\eqdef &\sigma^2\Id_n +\sum_{k\in \I_\gamma\setminus\{j\}} \gamma_kx_kx_k'\\
&=&\sigma^2\Id_n +\sum_{k\in\I_0}\gamma_kx_kx_k' +\sum_{k\in\I_1}\gamma_kx_kx_k'.\end{eqnarray*}
By the Woodbury matrix identity and the assumption $X_{\I_0}'X_{\I_1} =0$, we get:
\[C_{j,\gamma}^{-1}=\frac{1}{\sigma^2}\Id_n -\frac{1}{\sigma^4}X_{\I_0}\left(\Gamma^{-1}_{\I_0}+\frac{1}{\sigma^2}X_{\I_0}'X_{\I_0}\right)^{-1}X_{\I_0}' -\frac{1}{\sigma^4}X_{\I_1}\left(\Gamma^{-1}_{\I_1}+\frac{1}{\sigma^2}X_{\I_1}'X_{\I_1}\right)^{-1} X_{\I_1}'.\]
Hence, for $k\in\I$, and using the fact that $j\notin \I$, we have
\[x_j'C_{j,\gamma}^{-1}x_k=0,\;\mbox{ and }\; x_j'C_{j,\gamma}^{-1}x_j = \frac{1}{\sigma^2}\|x_j\|^2.\]
Therefore, if $Y=X\beta_\star +\epsilon$, we get
\[x_j'C_{j,\gamma}^{-1}Y = \frac{1}{\sigma^2}\seq{x_j,\epsilon}\sim\textbf{N}\left(0,\sigma^2\|x_j\|^2\right).\]
Now, the matrix $C_j$ defined in Proposition \ref{prop1} is $C_j=C_{j,\hat\gamma_n}$. Hence
\[\PP\left[\hat\gamma_{n,j}=0\right] =\PP\left[\left(x_j'C_{j,\hat\gamma_n}^{-1}Y\right)^2\leq x_j'C_{j,\hat\gamma_n}^{-1}x_j\right]=  \PP\left[Z^2\leq 1\right],\]
where $Z\sim\textbf{N}(0,1)$. Hence the result.
\end{proof}

\subsection{Proof of Theorem \ref{thm1}}\label{proof:thm1}
\begin{proof}
Under H\ref{A2}, $x_jC_{j,\hat\gamma_n}^{-1}Y=\seq{x_j,Y}/\sigma^2$, and $x_jC_{j,\hat\gamma_n}^{-1}x_j= \|x_j\|^2/\sigma^2$. Hence
\[\tilde\gamma_j=\left\{ \begin{array}{ll} \frac{\seq{x_j,Y}^2-\sigma^2\|x_j\|^2}{\|x_j\|^2} & \mbox{ if }\;\; \seq{x_j,Y}^2>\sigma^2\|x_j\|^2(1+z_\star)\\
0 &\mbox{ otherwise}.\end{array}\right.\]
Similarly, under H\ref{A1} $\hat\beta_{n,j}$ has the explicit form $\hat\beta_{n,j} =\frac{\seq{Y,X_j}}{\|x_j\|^2+\frac{\sigma^2}{\hat\gamma_{n,j}}}$. It follows that
\[\tilde\beta_{n,j}=\left\{\begin{array}{ll} \frac{\seq{Y,x_j}}{\|x_j\|^2+\frac{\sigma^2}{\hat\gamma_{n,j}}} & \mbox{ if } \; \seq{Y,x_j}^2>\sigma^2\|x_j\|^2(1+z_\star)\\
0 & \mbox{ otherwise } .\end{array}\right.\]
Again using  the orthogonality assumption of $X$, we obtain $\seq{Y,x_j}=\beta_{\star,j}\|x_j\|^2 + \seq{\epsilon,X_j}$. We set $t_j\eqdef \seq{\epsilon,X_j}$. Then it follows that
\[\tilde\beta_{n,j}-\beta_{\star,j}=\left\{\begin{array}{ll} 0 & \mbox{ if } \beta_{\star,j}=0,\; \mbox{ and } \left(\frac{t_j}{\|x_j\|^2}\right)^2\leq \frac{\sigma^2}{\|x_j\|^2}(1+z_\star)\\
\frac{t_j}{\|x_j\|^2+\frac{\sigma^2}{\tilde \gamma_{n,j}}} & \mbox{ if } \beta_{\star,j}=0, \mbox{ and } \left(\frac{t_j}{\|x_j\|^2}\right)^2> \frac{\sigma^2}{\|x_j\|^2}(1+z_\star)\\
-\beta_{\star,j} & \mbox{ if } \beta_{\star,j}\neq 0 \mbox{ and } \left(\frac{t_j}{\|x_j\|^2}+\beta_{\star,j}\right)^2\leq \frac{\sigma^2}{\|x_j\|^2}(1+z_\star)\\
\frac{\|x_j\|^2\beta_{\star,j} +t_j}{\|x_j\|^2 + \frac{\sigma^2}{\tilde \gamma_{n,j}}}-\beta_{\star,j} & \mbox{ if } \beta_{\star,j}\neq 0 \mbox{ and } \left(\frac{t_j}{\|x_j\|^2}+\beta_{\star,j}\right)^2> \frac{\sigma^2}{\|x_j\|^2}(1+z_\star).\end{array}\right.\]

Suppose that $j\in \I=\I_{\gamma_\star}$ and $\left(\frac{t_j}{\|x_j\|^2}+\beta_{\star,j}\right)^2\leq \frac{\sigma^2}{\|x_j\|^2}(1+z_\star)$. Then with $Z_j\eqdef \frac{t_j}{\sigma\|x_j\|}$,
\[|\beta_{\star,j}|\leq \left|\frac{t_j}{\|x_j\|^2}\right| + \frac{\sigma}{\|x_j\|}\sqrt{1+z_\star}=\frac{\sigma}{\|x_j\|}\left(|Z_j| + \sqrt{1+z_\star}\right).\]
Hence for such index $j$, 
\begin{equation}\label{theo1eq1}
\beta_{\star,j}^2\leq \frac{2\sigma^2}{\|x_j\|^2}\left(Z_j^2 + 1+z_\star\right).\end{equation}

But for $j\in\I_{\gamma_\star}$, such that $\left(\frac{t_j}{\|x_j\|^2}+\beta_{\star,j}\right)^2> \frac{\sigma^2}{\|x_j\|^2}(1+z_\star)$, $\tilde\gamma_{n,j}=\left(\frac{t_j}{\|x_j\|^2}+\beta_{\star,j}\right)^2- \frac{\sigma^2}{\|x_j\|^2}$. Using this with some easy algebra, we obtain that for such index $j$,
\begin{equation}\label{theo1eq2}
\frac{\|x_j\|^2\beta_{\star,j} +t_j}{\|x_j\|^2 + \frac{\sigma^2}{\tilde \gamma_{n,j}}}-\beta_{\star,j}=\frac{t_j}{\|x_j\|^2}-\frac{\sigma^2}{t_j+\|x_j\|^2\beta_{\star,j}}.\end{equation}
It follows that
\begin{equation}\label{theo1eq3}
\left|\frac{\|x_j\|^2\beta_{\star,j} +t_j}{\|x_j\|^2 + \frac{\sigma^2}{\tilde \gamma_{n,j}}}-\beta_{\star,j}\right|\leq \frac{\sigma}{\|x_j\|}\left(1+ |Z_j|\right).\end{equation}
With (\ref{theo1eq1}) and (\ref{theo1eq3}), we get
\[\sum_{j\in\I_{\gamma_\star}} \left(\tilde \beta_{n,j}-\beta_{\star,j}\right)^2\leq \frac{2\sigma^2}{cn}\sum_{j\in\I_{\gamma_\star}} (1+z_\star + Z_j^2),\]
where $c=\min_{1\leq i\leq p}\|x_j\|^2/n$. Set $s\eqdef |I_{\gamma_\star}|$. By \cite{teicher84}~Lemma 5, $\PE(|Z_j^2-1|^k)\leq k!2^{k-2}$, $k>2$. Hence by Bernstein's inequality (see e.g. \cite{vandervaartetwellner96}~Lemma 2.2.11), we conclude that
\begin{multline*}
\PP\left[\sum_{j\in\I_{\gamma_\star}} (1+z_\star + Z_j^2)>2(1+z_\star)s\right]\leq \PP\left[\sum_{j\in\I_{\gamma_\star}} (Z_j^2-1)>z_\star s\right]\\
\leq \exp\left(-\frac{z_\star^2s^2}{4(1+z_\star s)}\right)\leq \exp\left(-\frac{z_\star s}{8}\right)\leq \frac{1}{p^{c_0s/8}}.\end{multline*}
Hence with probability at least $1-\frac{1}{p^{c_0s/8}}$,
\begin{equation}\label{theo1eqfinal1}
\sum_{j\in \I_{\gamma_\star}} \left(\tilde\beta_{n,j}-\beta_{\star,j}\right)^2\leq \frac{4\sigma^2}{cn}\left(1+z_\star\right)s\leq \frac{4(1+c_0)}{c}\frac{\sigma^2 s\log p}{n}.\end{equation}

 On the other hand, from (\ref{theo1eq2}), $\frac{t_j}{\|x_j\|^2+\frac{\sigma^2}{\tilde \gamma_{n,j}}}=\frac{t_j}{\|x_j\|^2}-\frac{\sigma^2}{t_j}$, hence
 \begin{multline*}
 \sum_{j\notin \I_{\gamma_\star}} \left(\tilde\beta_{n,j}-\beta_{\star,j}\right)^2=\sum_{j\notin\I_{\gamma_\star},\,Z_j^2>1+z_\star} \frac{\sigma^2}{\|x_j\|^2}\left(Z_j-\frac{1}{Z_j}\right)^2\leq \frac{\sigma^2}{cn}\sum_{j\notin\I_{\gamma_\star}} Z_j^2\textbf{1}_{\{Z_j^2>1+z_\star\}}\\
 \leq\frac{\sigma^2}{cn}\sum_{j=1}^p Z_j^2\textbf{1}_{\{Z_j^2>1+z_\star\}}.
 \end{multline*}
Now for any $\kappa\in (0,1/2)$, $a>0$, and by Markov's inequality
\begin{eqnarray}\label{theo1eq4}
\PP\left[\sum_{j=1}^p Z_j^2\textbf{1}_{\{Z_j^2>1+z_\star\}}>a\right]&=& \PP\left[\exp\left(\sum_{j=1}^p \kappa Z_j^2\textbf{1}_{\{Z_j^2>1+z_\star\}}\right) >e^{a \kappa }\right]\nonumber\\
&\leq &\exp\left[-a \kappa  +  p\log\PE\left[\exp\left(\kappa Z_1^2\textbf{1}_{\{Z_1^2>1+z_\star\}}\right)\right]\right].\end{eqnarray}
We calculate that
\begin{eqnarray*}
\PE\left[\exp\left(\kappa Z_1^2\textbf{1}_{\{Z_1^2>1+z_\star\}}\right)\right]&=&2\int_0^{\sqrt{1+z_\star}} \frac{e^{-x^2/2}}{\sqrt{2\pi}}\rmd x +2 \int_{\sqrt{1+z_\star}}^\infty \frac{e^{-\frac{1}{2}(1-2\kappa)x^2}}{\sqrt{2\pi}}\rmd x\\
&\leq & 1 + 2 \int_{\sqrt{1+z_\star}}^\infty \frac{e^{-\frac{1}{2}(1-2\kappa)x^2}}{\sqrt{2\pi}}\rmd x\\
&\leq & 1 + \frac{\exp\left(-\frac{z_\star(1-2\kappa)}{2}\right) }{1-2\kappa},\end{eqnarray*}
where the last inequality uses some easy algebra and the well known bound on the Gaussian cdf: $\int_t^\infty \frac{1}{\sqrt{2\pi}}e^{-x^2/2a^2}\rmd x\leq \frac{a^2 e^{-t^2/2a^2}}{t\sqrt{2\pi}}$, valid for all $t>0$. With $z_\star=c_0\log p$, We deduce that
\[p\log \PE\left[\exp\left(\kappa Z_1^2\textbf{1}_{\{Z_1^2>1+z_\star\}}\right)\right]\leq p\log\left(1+  \frac{p^{-\frac{c_0(1-2\kappa)}{2}}}{(1-2\kappa)}\right)\leq\frac{p^{c_0\kappa}}{1-2\kappa}.\]
Hence with $a=\frac{2-2\kappa}{\kappa}\frac{p^{c_0\kappa}}{1-2\kappa}\leq 4\kappa^{-1}p^{c_0\kappa}$, (\ref{theo1eq4}) gives
\[\PP\left[\sum_{j=1}^p Z_j^2\textbf{1}_{\{Z_j^2>1+z_\star\}}>a\right]\leq \exp\left(-a\kappa + \frac{p^{c_0\kappa}}{1-2\kappa}\right) \leq \exp\left(-p^{c_0\kappa}\right).\]
We conclude that with probability at least $1-\exp(-p^{c_0\kappa})$, $\sum_{j=1}^p Z_j^2\textbf{1}_{\{Z_j^2>1+z_\star\}}\leq a\leq 4\kappa^{-1}p^{c_0\kappa}$, so that
\begin{equation}\label{theo1eqfinal2}
 \sum_{j\notin \I_{\gamma_\star}} \left(\tilde\beta_{n,j}-\beta_{\star,j}\right)^2\leq \frac{4\sigma^2}{cn}\kappa^{-1}p^{c_0\kappa},\end{equation}
 with probability at least $1-\exp(-p^{c_0\kappa})$. Combining (\ref{theo1eqfinal1}) and (\ref{theo1eqfinal2}) it follows that
 \[\|\tilde\beta_n-\beta_\star\|^2_2\leq \frac{4\sigma^2}{c n}\left((1+c_0)s\log p+ \frac{p^{c_0\kappa}}{\kappa}\right),\]
 with probability at least $1-\frac{1}{p^{c_0s/8}}-\exp(-p^{c_0\kappa})$. Finally since $\log s>1$, we can take $\kappa=\log (s)/(c_0\log(p))\in (0,1/2)$ to achieve $s=p^{c_0\kappa}$. With this choice, 
 \[\frac{p^{c_0\kappa}}{\kappa}=\frac{s\log(p)}{c_0\log(s)}\leq \frac{s\log(p)}{c_0},\]
and the theorem follows easily.

\end{proof}

\subsection{Proof of Corollary \ref{coro1}}\label{proof:coro1}
\begin{proof}
Recall that $\I=\I_{\gamma_\star}=\{1\leq j\leq p:\; \beta_{\star,j}\neq 0\}$. We write $u_{\I_{\gamma_\star}}=(u_j,\;j\in\I_{\gamma_\star})$, and $u_{\I^c_{\gamma_\star}}=(u_j,\;j\notin\I_{\gamma_\star})$. It is clear that whenever (\ref{cond:signal}) holds and $|\tilde\beta_{n,j}-\beta_{\star,j}|\leq \sqrt{M\sigma^2 s\log(p)/n}$, we have $\textsf{sign}(\tilde\beta_{n,j})=\textsf{sign}(\beta_{\star,j})$. Since $|\tilde\beta_{n,j}-\beta_{\star,j}|\leq \|\tilde\beta_n-\beta_\star\|_2$,  we conclude that $\textsf{sign}(\tilde\beta_{n,\I_{\gamma_\star}})=\textsf{sign}(\beta_{\star,\I_{\gamma_\star}})$, with probability at least  $1-\frac{1}{p^{(c_0 s)/8}}-\frac{1}{\exp(s)}$.

For the other part, it follows from the definition of $\tilde\beta_n$ that for $\beta_{\star,j}=0$, $\textsf{sign}(\tilde\beta_{n,j})\neq 0$ implies that $Z_j^2\geq 1+z_\star$. Hence
\begin{eqnarray*}
\PP\left(\textsf{sign}(\tilde\beta_{n,\I^c_{\gamma_\star}})\neq \textsf{sign}(\beta_{\star,\I^c_{\gamma_\star}})\right)&\leq &\sum_{j=1}^p \PP\left(Z_j^2>1+z_\star\right)= \sum_{j=1}^p 2\PP\left(Z_j>\sqrt{1+z_\star}\right)\\
&\leq & \sum_{j=1}^p e^{-\frac{1}{2}(1+z_\star)}\leq \exp\left(\log p-\frac{c_0}{2}\log p\right)\\
&\leq &\frac{1}{p^{\frac{c_0}{2}-1}}.\end{eqnarray*}
The results follows.

\end{proof}

\subsection{Proof of Proposition \ref{prop2}}\label{proof:prop2}
\begin{proof}
We fix an arbitrary $j\in\{1,\ldots,p\}$. We define $x_k=\|x_j\|^2\gamma_{j}^{(k)}$, where we omit the dependence on $j$ to keep the notation simple. It follows from (\ref{recEM}) that
\[x_{k+1}=B \left(\frac{x_k}{\sigma^2+x_k}\right)^2 + \frac{\sigma^2 x_k}{\sigma^2+x_k}=\Psi(x_k),\]
where $B=\seq{y,X_j}^2/\|x_j\|^2$, and 
\[\Psi(x)=B \left(\frac{x}{\sigma^2+x}\right)^2 + \frac{\sigma^2 x}{\sigma^2+x}.\]
Notice that for all $x\geq 0$, $\Psi(x)\in [0,\sigma^2+B]$. Hence the sequence $\{x_k,\,k\geq 0\}$ is bounded. The equation $\Psi(x)=x$ is equivalent to $x^2(x+\sigma^2)=x^2B$. If $B\leq \sigma^2$, $\Psi(x)=x$ has a unique solution $x=0$. If $B> \sigma^2$, then $\Psi(x)=x$ has two solutions $x=0$ and $x=B-\sigma^2$. The derivatives of $\Psi$ are given by
\[\Psi'(x)=\frac{\sigma^4}{(\sigma^2+x)^2} + \frac{2x\sigma^2 B}{(\sigma^2+x)^3},\;\;\Psi^{''}(x)=\frac{-2\sigma^2x(\sigma^2+2B)+2\sigma^4(B-\sigma^2)}{(\sigma^2+x)^4}.\]
We consider two cases
\begin{enumerate}
\item Case 1: $B\leq \sigma^2$.\;\;\; Then $\Psi^{''}(x)\leq 0$ for all $x\geq 0$. Hence $\Psi$ is concave. This implies that for all $x\geq 0$,
\[\Psi(x)\leq \Psi(0) +\Psi'(0)x=x.\]
This implies that $x_k=\Psi(x_{k-1})\leq x_{k-1}$. This means that the sequence $\{x_k,\;k\geq 0\}$ is bounded and non-increasing, hence has a limit $x_\star$. By continuity of $\Psi$, the limit point $x_\star$ satisfies $\Psi(x_\star)=x_\star$. Hence $x_\star=0$, since we have seen above that $0$ is the only fixed-point of $\Psi$ when $B\leq \sigma^2$.
\item Case 2: $B> \sigma^2$:\;\;\;\; Then $\Psi^{''}(0)>0$, and by Taylor expansion, in a neighborhood of $0$, we have $\Psi(x)\geq \Psi(0) + \Psi'(0)x=x$ for all $x>0$ small enough. If $x_\star=B_j-\sigma^2$ denotes the unique positive fixed point of $\Psi$, we can conclude that for all $x\in [0,x_\star]$, $\Psi(x)\geq x$, and for $x>x_\star$, $\Psi(x)< x$. Therefore, if $x_0\in [0,x_\star]$, then $\{x_k,\, k\geq 0\}$ is increasing and bounded, hence converges to the unique positive fixed point $x_\star$ (recall that $x_0>0$). Whereas, if $x_0>x_\star$, then $\{x_k,\, k\geq 0\}$ is decreasing and bounded, hence converges to the unique positive fixed point $x_\star$.
\end{enumerate}

\end{proof}

\bibliographystyle{ims}
\bibliography{biblio}

\end{document}